\newtheorem{theorem}{Theorem}
\newtheorem{corollary}[theorem]{Corollary}
\newtheorem{lemma}[theorem]{Lemma}
\newtheorem*{theorem*}{Theorem}
\newtheorem*{corollary*}{Corollary}
\newtheorem*{lemma*}{Lemma}
\begin{document}
\title{Halving the Cost of Quantum Algorithms with Randomization}

\author{John M. Martyn} 
\affiliation{Center for Theoretical Physics, Massachusetts Institute of Technology, Cambridge, Massachusetts 02139, USA}
\affiliation{IBM Quantum, MIT-IBM Watson AI Lab, Cambridge, Massachusetts 02142, USA}

\author{Patrick Rall} 
\affiliation{IBM Quantum, MIT-IBM Watson AI Lab, Cambridge, Massachusetts 02142, USA}

\begin{abstract}
    Quantum signal processing (QSP) provides a systematic framework for implementing a polynomial transformation of a linear operator, and unifies nearly all known quantum algorithms. In parallel, recent works have developed \emph{randomized compiling}, a technique that promotes a unitary gate to a quantum channel and enables a quadratic suppression of error (i.e., $\epsilon \rightarrow O(\epsilon^2)$) at little to no overhead. Here we integrate randomized compiling into QSP through \emph{Stochastic Quantum Signal Processing}. Our algorithm implements a probabilistic mixture of polynomials, strategically chosen so that the average evolution converges to that of a target function, with an error quadratically smaller than that of an equivalent individual polynomial. Because nearly all QSP-based algorithms exhibit query complexities scaling as $O(\log(1/\epsilon))$---stemming from a result in functional analysis---this error suppression reduces their query complexity by a factor that asymptotically approaches $1/2$. By the unifying capabilities of QSP, this reduction extends broadly to quantum algorithms, which we demonstrate on algorithms for real and imaginary time evolution, phase estimation, ground state preparation, and matrix inversion.
\end{abstract}

\preprint{MIT-CTP/5756}
\maketitle

\section{Introduction} 

Classical randomness plays a pivotal role in the design of quantum protocols and algorithms. In the near-term, randomized benchmarking~\cite{Knill_2008} is central to calibrating and assessing the quality of quantum gates, and quasi-probability methods like probabilistic error cancellation and noise twirling can help reduce noise~\cite{van_den_Berg_2023}. Similarly, random circuit sampling is central to quantum supremacy experiments \cite{Arute_2019}, and randomized measurements provide a powerful probe into the properties of complex quantum systems~\cite{Elben_2022}. As we progress towards quantum advantage and early fault-tolerant quantum hardware, many lines of research aim to reduce the requirements of traditional quantum algorithms by incorporating classical randomness~\cite{lin2020optimal, Dong_2022, lin2022heisenberg}.

\emph{Randomized compiling} is a key example of leveraging classical randomness to improve quantum computation~\cite{Wallman_2016, Campbell_2017, hastings2016turning}. As its name suggests, this process randomly compiles gates at execution time, or equivalently, promotes a unitary gate to a quantum channel that is a probabilistic mixture of unitaries. Remarkably, randomized compiling can quadratically suppresses gate errors without increasing the cost of circuit synthesis. Yet, applications of this technique to quantum algorithms have so far been restricted to Trotterized Hamiltonian simulation \cite{Campbell_2019, Childs_2019, Ouyang_2020, cho2022doubling, nakaji2023qswift} and phase estimation~\cite{Wan_2022}, leaving a vacuum of applications to other algorithms.  

In an effort to fill this gap of randomized quantum algorithms, we propose using quantum signal processing (QSP)~\cite{Low_2016, Low_2017} as a medium for achieving widespread advantage of randomized compiling. QSP prepares a polynomial transformation of a linear operator, and has been shown to encompass nearly all quantum algorithms, from Hamiltonian simulation and quantum search, to matrix inversion and fast integer factoring \cite{martyn2021grand, Gily_n_2019}.

In this work, we achieve exactly this goal. We merge randomized compiling with QSP by developing \emph{Stochastic Quantum Signal Processing} (Stochastic QSP). By virtue of randomized compiling, our construction quadratically suppresses the error in a QSP polynomial approximation of a target function. To study how this suppression impacts the cost of QSP-based algorithms, we show that an elementary result in the approximation of smooth functions implies that nearly all QSP-based algorithms achieve a query complexity that scales with the error $\epsilon$ as $O(\log(1/\epsilon))$, which we also empirically confirm. Hence the quadratic suppression of error afforded by stochastic QSP translates to an asymptotic halving of the cost of a QSP-based algorithm over their deterministic counterparts (asymptotic in the limit of $\log(1/\epsilon)$ dominating the cost). In realizing this cost reduction, we ``combine the strengths of QSP and randomization," as hypothesized in Ref.~\cite{berry2019random}.

An outline of this work is as follows. After reviewing QSP and other preliminary topics in Sec.~\ref{sec:preliminaries}, we present stochastic QSP in Sec.~\ref{sec:sqsp}. We then demonstrate the versatility of our scheme by showing its compatibility with several generalizations and variants of QSP in Sec.~\ref{sec:Generalizations}. Finally, in Sec.~\ref{Sec:Applications}, we benchmark the performance of stochastic QSP for various polynomials relevant to quantum algorithms, including those for real and imaginary time evolution, phase estimation, ground state preparation, and matrix inversion. 
In Appendix~\ref{app:SmoothFunctions} we review some useful results on Fourier and Chebyshev expansions of smooth functions, and in Appendix~\ref{app:Mixing_Lemmas} we prove some extensions of randomized compilation techniques.

\section{Preliminaries}\label{sec:preliminaries}
We begin by discussing the preliminary topics of this work, including notation and background concepts (Sec.~\ref{sec:Notation}), quantum signal processing (QSP) (Sec.~\ref{sec:QSP}), polynomial approximations to smooth functions (Sec.~\ref{sec:PolyApproxSmooth}), and randomized compiling (Sec.~\ref{sec:RandCompMixingLemma}).

\subsection{Notation and Background Concepts}\label{sec:Notation}

We will study functions $F(x)$ on the domain $x \in [-1,1]$ and use the function norm
\begin{equation}
    \| F \|_{[-1,1]} := \max_{x\in[-1,1]} |F(x)| , 
\end{equation}
where we will focus on functions bounded as $\|F \|_{[-1,1]} \leq 1$. 

A convenient set of functions on this domain are the Chebyshev polynomials. The order $n$ Chebyshev polynomial is defined as $T_n(x) = \cos(n \arccos(x))$ for integer $n\geq 0$, and is a polynomial of degree $n$ with parity $n \bmod 2$ (i.e., either even or odd) and bounded magnitude $\|T_n\|_{[-1,1]} = 1$. The Chebyshev polynomials furnish an orthogonal basis in which an arbitrary function on $x\in [-1,1]$ can be expanded: 
\begin{equation}\label{eq:ChebyExpansion}
    F(x) = \frac{c_0}{2} + \sum_{n=1}^\infty c_n T_n(x) , \quad  c_n = \frac{2}{\pi} \int_{-1}^1 \frac{F(x) T_n(x)}{\sqrt{1-x^2}} dx , 
\end{equation}
where $c_n$ are the Chebyshev coefficients.

In this work we will also study unitary and non-unitary transformations. We will denote an operator by a Latin character, say $A$, and the associated channel by the corresponding calligraphic character: $\mathcal{A}(\rho) = A \rho A^\dag$. In analyzing these operators, we will consider the spectral norm $\|A \| = \sup_{|\psi \rangle} \left\| A |\psi \rangle \right\|$ and the trace norm $\| A \|_1 = \tr \big( \sqrt{A^\dag A} \big) $, which equate to the maximal singular value and the sum of singular values, respectively.

Another relevant metric is the diamond norm, defined for a channel $\mathcal{E}$ as:
\begin{equation}
    \| \mathcal{E} \|_\diamond = \sup_{\rho} \left\| (\mathcal{E} \otimes \mathcal{I})(\rho) \right\|_1 , 
\end{equation}
where this supremum is taken over normalized density matrices $\rho$ in a possibly-enlarged Hilbert space, and $\mathcal{I}$ is the identity channel. The diamond norm induces the diamond distance between two channels:
\begin{equation}\label{eq:diamond_distance}
\begin{aligned}
    d_\diamond (\mathcal{E}, \mathcal{F})  :&= \frac{1}{2} \| \mathcal{E} - \mathcal{F}\|_\diamond . 
\end{aligned}
\end{equation}
For channels $\mathcal{A}(\rho) = A \rho A^\dag$ and $\mathcal{B}(\rho) = B \rho B^\dag$ with spectral norms $\|A\|, \|B \| \leq 1$, the diamond distance is upper bounded as (see Lemma 4 of Ref.~\cite{Rall_2021} for proof):
\begin{equation}\label{eq:diamond_distance_bound}
    d_\diamond (\mathcal{A}, \mathcal{B}) = \frac{1}{2}  \| \mathcal{A} - \mathcal{B}\|_\diamond \leq \| A - B \| . 
\end{equation}

\subsection{Quantum Signal Processing}\label{sec:QSP}
The quantum signal processing (QSP) algorithm is a systematic method of implementing a polynomial transformations on a quantum subsystem~\cite{Low_2016, Low_2017, Low_2019}. QSP works by interleaving a signal operator $U$, and a signal processing operator $S$, both taken to be SU(2) rotations about different axes. Conventionally, $U$ is an $x$-rotation through a fixed angle, and the $S$ a $z$-rotation through a variable angle $\phi$:
\begin{equation}\label{eq:QSP_inputs}
    U(x) = \begin{bmatrix}
        x & i\sqrt{1-x^2} \\
        i\sqrt{1-x^2} & x
    \end{bmatrix}, \qquad 
    S(\phi) = e^{i\phi Z}.
\end{equation}
Then, with a set of $d+1$ \textit{QSP phases} $\vec{\phi} = (\phi_0, \phi_1, ... , \phi_d) \in \mathbb{R}^{d+1}$, the following \textit{QSP sequence} is defined as an interleaved product of $U$ and $S$, whose matrix elements are polynomials in $x$:
\begin{equation}\label{eq:QSP_seqeunce}
    \begin{aligned}
        U_{\vec{\phi}}(x) &= S(\phi_0) \prod_{i=1}^d  U(x)  S(\phi_i) \\ 
        &= \begin{bmatrix}
        P(x) & iQ(x)\sqrt{1-x^2} \\
        iQ^*(x)\sqrt{1-x^2} & P^*(x)
        \end{bmatrix}, 
    \end{aligned}
\end{equation}
where $P(x)$ and $Q(x)$ are polynomials parameterized by $\vec{\phi}$ that obey:
\begin{equation}\label{eq:qsp_conditions}
    \begin{split}
        & \text{1. } {\rm deg}(P) \leq d, \ {\rm deg}(Q) \leq d-1 \\
        & \text{2. } P(x)\ \text{has parity } d \bmod 2, \text{ and } Q(x)\ \text{has parity } \\
        & \ \ \ (d-1)\bmod 2 \\
        & \text{3. } |P(x)|^2 + (1-x^2) |Q(x)|^2 = 1, \ \forall \ x \in [-1,1]. 
    \end{split}
\end{equation}

This result implies that one can prepare polynomials in $x$ by projecting into a block of $U_{\vec{\phi}}$, e.g. $\langle 0| U_{\vec{\phi}} | 0\rangle = P(x)$. While this class of polynomials is limited by the conditions of Eq.~\eqref{eq:qsp_conditions}, one can show that by projecting into other bases (e.g., the $| + \rangle \langle + |$ basis), and incorporating linear-combination-of-unitaries circuits~\cite{Berry_2014, Berry_2015_Simulating, Berry_2015_Hamiltonian}, QSP can encode an arbitrary degree-$d$ polynomial that need only obey $\|P\|_{[-1,1]} \leq 1$~\cite{Gily_n_2019}. For any such polynomial, the corresponding QSP phases $\vec{\phi}$ can be efficiently determined classically~\cite{Haah_2019, chao2020finding, dong2021efficient, Ying_2022, yamamoto2024robust}, thus amounting to merely a pre-computation step. As per Eq.~\eqref{eq:QSP_seqeunce}, the cost of realizing such a degree-$d$ polynomial is $d$ queries to $U(x)$.

Remarkably, QSP can be generalized to implement polynomial transformations of linear operators through its extension to the quantum eigenvalue transformation (QET)~\cite{Low_2017, Low_2019} and quantum singular value transformation (QSVT)~\cite{Gily_n_2019}. This is achieved analogous to QSP: provided access to a unitary $U[A]$ that block-encodes an operator $A$, one can construct a sequence of $U[A]$ and parameterized rotations that encodes a polynomial $P(A)$:
\begin{equation}\label{eq:QET_QSVT_seq}
    U[A]=\begin{bmatrix}
        A & \cdot \\
        \cdot & \cdot
    \end{bmatrix} \ \mapsto \ U_{\vec{\phi}}[A] = 
    \begin{bmatrix}
        P(A) & \cdot \\
        \cdot & \cdot
    \end{bmatrix} ,
\end{equation}
where the unspecified entries ensure unitarity. In essence, this applies QSP within each eigenspace (or singular value space) of $A$, and outputs a degree-$d$ polynomial $P(A)$ acting on the eigenvalues (or singular values) of $A$. The cost of realizing this is $d$ queries to the block-encoding, translating to a runtime $O(d)$.

QET and QSVT are powerful algorithms, shown to unify and simplify most known quantum algorithms, while maintaining near-optimal query complexities~\cite{Gily_n_2019}. An algorithm can be cast into the language of QET/QSVT by constructing a polynomial approximation to a matrix function that solves the problem of interest. For instance, in Hamiltonian simulation, one can design a polynomial $P(H) \approx e^{-iHt}$ to simulate time evolution~\cite{Low_2017, Low_2019}. Algorithms encompassed in this framework include the elementary algorithms of search, simulation, and phase estimation~\cite{martyn2021grand, Rall_2021}, as well as more intricate algorithms, like matrix inversion~\cite{lin2020optimal, tong2021fast, dalzell2024shortcut}, and ground state preparation~\cite{lin2022heisenberg, Dong_2022}.

In this work, we use the term``QSP" in place of ``QET" and ``QSVT", following the conventional parlance. However, this should be understood to be QET/QSVT when acting on a linear operator rather than a scalar.

\subsection{Polynomial Approximations to Smooth Functions}\label{sec:PolyApproxSmooth}

As emphasized above, the utility of QSP lies in generating matrix functions without the need to unitarily diagonalize the underlying matrix. Specifically, QSP enables the approximation of a matrix function $F(A) \approx P(A)$ by a polynomial $P(A)$, where the accuracy of this approximation can be tuned by increasing the degree of $P(x)$. Because the cost of a QSP-based algorithms scales with the polynomial degree, their complexity rests on results in approximation theory. 

To make this connection concrete, consider the decaying exponential function $ e^{-\beta x}$ for a parameter $\beta>0$. In QSP, this function is employed to prepare thermal states and estimate partition functions at inverse temperature $\beta$~\cite{Gily_n_2019}. It is well established that $ e^{-\beta x}$ can be approximated to within additive error $\epsilon$ over $x\in[-1,1]$ by a polynomial of degree $d = O(\sqrt{\beta} \log(1/\epsilon))$~\cite{Gily_n_2019, low2017hamiltonian}. Similarly, consider the error function $\text{erf}(kx)$ for a parameter $k>0$, which is used in QSP to approximate the step function by selecting a large $k$ and is applicable to ground state preparation~\cite{Lin_2020, Dong_2022}. Prior work has proven that $\text{erf}(kx)$ can be approximated to within additive error $\epsilon$ by a polynomial of degree $d = O(k \log(1/\epsilon))$~\cite{Gily_n_2019, low2017hamiltonian}. In both cases, the degree grows with decreasing error and increasing parameters $\beta$ or $k$.

Observe that in each of these examples, the degree of the polynomial scales with the error as $O(\log(1/\epsilon))$. This scaling is a generic feature of polynomial approximations to smooth functions, which originates from the expansion of a function on $x\in[-1,1]$ in the basis of Chebyshev polynomials (see Eq.~\eqref{eq:ChebyExpansion}): 
\begin{equation}
    F(x) = \frac{c_0}{2} + \sum_{n=1}^\infty c_n T_n(x). 
\end{equation} 
As we prove in Appendix~\ref{app:SmoothFunctions}, if $F(x)$ is $C^\infty$ function (i.e., continuous and infinitely differentiable), then its Chebyshev coefficients decay super-polynomially as $|c_n| = e^{-O(n^r)}$ for some exponent $r > 0$.

For a large class of smooth functions, it is found that $r=1$~\cite{boyd2001chebyshev}, such that $|c_n| = e^{-O(n)}$ decays geometrically. In this case, a truncation of the Chebyshev series at order $d$, $P(x) = \sum_{n=0}^{d} c_n T_n(x)$, furnishes a degree $d$ polynomial approximation to $F(x)$ that suffers error 
\begin{equation}
\begin{aligned}
    &\max_{x\in [-1,1]} \big| P(x) - F(x) \big| = \max_{x \in [-1,1]} \Bigg|\sum_{n=d+1}^\infty c_n T_n(x) \Bigg| \\ 
    & \  \leq \sum_{n=d+1}^\infty |c_n| = \sum_{n=d+1}^\infty e^{-O(n)} = e^{-O(d)}.
\end{aligned}
\end{equation}
Hence, to guarantee an error at most $\epsilon$, it suffices to choose a degree $d = O(\log(1/\epsilon))$.

In practice many polynomial approximations are constructed via truncated Chebyshev series. This includes polynomial approximations to a wide range of functions relevant to quantum algorithms, including the decaying exponential $e^{-\beta x}$, trigonometric functions $\sin(tx)$, $\cos(tx)$, the step function $\Theta(x)$, the inverse function $1/x$,\footnote{Although the step function $\Theta(x)$ and the inverse function $1/x$ exhibit singularities at $x=0$, and thus are not $C^\infty$ functions, they can however be approximated by $C^\infty$ functions by excluding a small region around their singularity. This strategy is used in practice, and renders these function amenable to results on polynomial approximations to smooth functions.} and beyond. Accordingly, these approximations all exhibit degrees that scale with the error as $d = O(\log(1/\epsilon))$, which carries over to the complexity of their corresponding QSP-based algorithms.

\subsection{Randomized Compiling and the Mixing Lemma}\label{sec:RandCompMixingLemma}

In order to incorporate randomization into QSP, we will use \emph{randomized compiling}. Formally introduced in Ref.~\cite{Wallman_2016}, randomized compiling replaces a deterministic quantum circuit with a circuit sampled from a distribution. This process can be viewed as replacing a unitary operation with a a quantum channel that is a probabilistic mixture of unitaries. 

Remarkably, if this mixture is chosen strategically, randomized compiling enables a quadratic suppression of error: if an individual gate approximates a target unitary with error $\epsilon$, the randomly compiled channel can approximate the corresponding target channel with error $O(\epsilon^2)$. This error suppression is achieved at little to no increase in overhead, requiring only the ability to classically sample a distribution and implement gates on the fly.

The precise error suppression is quantified by the Hastings-Campbell mixing lemma: 
\begin{lemma}[Hastings-Campbell Mixing Lemma~\cite{Campbell_2017, hastings2016turning}]\label{lemma:Mixing}
    Let $V$ be a target unitary operator, and $\mathcal{V}(\rho) = V \rho V^\dag $ the corresponding channel. Suppose there exist $m$ unitaries $\{ U_j \}_{j=1}^m$ and an associated probability distribution $p_j$ that approximate $V$ as
    \begin{equation}
    \begin{aligned}
        & \| U_j - V \| \leq a \text{ for all } j, \\ 
       & \Big\| \sum_{j=1}^m p_j U_j - V \Big\| \leq b ,
    \end{aligned}
    \end{equation}
    for some $a,b > 0$. Then, the corresponding channel $\Lambda(\rho) = \sum_{j=1}^m p_j U_j \rho U_j^\dag$ approximates $\mathcal{V}$ as
    \begin{equation}
        \| \Lambda - \mathcal{V} \|_\diamond \leq a^2 + 2b . 
    \end{equation}
\end{lemma}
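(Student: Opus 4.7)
The plan is to reduce everything to an operator-norm computation by writing each $U_j$ as a small perturbation of the target $V$. Concretely, I would define the error operators $E_j := U_j - V$, so that $\|E_j\| \le a$ by hypothesis, and introduce the averaged error $\bar E := \sum_{j=1}^m p_j E_j = \sum_{j=1}^m p_j U_j - V$, which obeys $\|\bar E\| \le b$. Expanding $U_j \rho U_j^\dag = (V+E_j)\rho(V+E_j)^\dag$ and averaging with weights $p_j$, the cross terms that are linear in $E_j$ collapse to $\bar E$:
\begin{equation}
\Lambda(\rho) - \mathcal{V}(\rho) = V\rho\,\bar E^\dag + \bar E\,\rho\,V^\dag + \sum_{j=1}^m p_j E_j \rho E_j^\dag .
\end{equation}
This is the key algebraic identity; the rest is norm estimation.

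Next I would pass to the diamond norm via the triangle inequality. Because the identity factor in the definition of $\|\cdot\|_\diamond$ commutes through the expression above, the same identity holds with every operator $A$ replaced by $A\otimes I$, whose spectral norm equals $\|A\|$. Using the standard Hölder-type inequality $\|A\sigma B\|_1 \le \|A\|\,\|B\|\,\|\sigma\|_1$ together with $\|V\|=1$, $\|(V\otimes I)\|=1$, and $\|\rho\|_1 = 1$ on any extension $\rho$, each of the two linear terms contributes at most $b$ in trace norm, and the quadratic term contributes at most $\sum_j p_j \|E_j\|^2 \le a^2$ since $\sum_j p_j = 1$. Collecting these and recalling the factor $\tfrac{1}{2}$ is absent (the lemma states the diamond norm, not the diamond distance), gives $\|\Lambda - \mathcal{V}\|_\diamond \le 2b + a^2$.

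The only real subtlety is making sure the diamond-norm supremum over ancilla-extended states does not inflate the bound. This is not actually an obstacle, because all three error terms have the form $A\,\sigma\,B$ for fixed operators $A,B$ acting on the system only (with $I$ on the ancilla), and for such ``elementary'' differences the diamond norm is controlled by the operator norms of $A$ and $B$—this is precisely the content of the inequality $d_\diamond(\mathcal{A},\mathcal{B}) \le \|A-B\|$ quoted in Eq.~\eqref{eq:diamond_distance_bound}, applied term-by-term. Thus no new machinery is needed; the proof is essentially a one-line expansion followed by triangle inequality and the submultiplicativity of trace norm under unitary-like conjugation.
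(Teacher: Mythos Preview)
Your proposal is correct and follows essentially the same route as the paper's proof (given for the generalized version in Appendix~\ref{app:Mixing_Lemmas}): write $U_j = V + E_j$, expand $\Lambda(\rho)-\mathcal{V}(\rho)$ into two cross terms and a quadratic term, and bound them by $2b$ and $a^2$ respectively via H\"older's inequality, with the diamond-norm supremum handled by the observation that $\|A\otimes I\|=\|A\|$. The only cosmetic slip is the closing appeal to Eq.~\eqref{eq:diamond_distance_bound}, which as stated concerns channels of the form $A\rho A^\dag$ rather than individual cross-terms $A\sigma B$; but the substantive argument you give just before that (bounding $\|A\sigma B\|_1$ by $\|A\|\,\|B\|\,\|\sigma\|_1$ after tensoring with the identity) is exactly what is needed and matches the paper.
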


This lemma enables a quadratic suppression of error if one can specify an ensemble of unitaries $\{ U_j \}$ that each achieve spectral error $a = \epsilon$, and a distribution $p_j$ such that $b=O(\epsilon^2)$. Then, while an individual unitary $U_j$ suffers diamond norm error $O(\epsilon)$, the channel $\Lambda$ achieves error $\leq a^2 + 2b = O(\epsilon^2)$. Importantly, because $\Lambda$ is a probabilistic mixture of the unitaries $\{ U_j \}$, the cost of simulating $\Lambda$ is no more expensive than the cost of sampling $p_j$ and implementing an individual unitary $U_j$.

The mixing lemma has been leveraged to improve a variety of quantum protocols through randomized compiling. Noteworthy examples include reducing the cost of gate synthesis~\cite{Campbell_2017, low2021halving, akibue_2024_unitary, yoshioka2024error, hastings2016turning}, tightening fault-tolerance thresholds for general noise models~\cite{Wallman_2016}, and enhancing the precision of state preparation~\cite{akibue2022quadratic, Akibue_2024_state}. On the algorithmic side, the mixing lemma has been merged with Trotterization to significantly reduce the complexity of chemical simulations~\cite{Campbell_2019, Ouyang_2020}, double the order of Trotter formulae~\cite{cho2022doubling}, and accelerate imaginary time evolution~\cite{Huang_2023, pocrnic2023composite}. Here we continue this campaign by extending randomized compiling to QSP. As QSP unifies nearly all quantum algorithms~\cite{martyn2021grand}, this paves the way for new randomized quantum algorithms with reduced query complexities.

\section{Stochastic Quantum Signal Processing}\label{sec:sqsp}
Our goal is to integrate randomized compiling into QSP, and thereby establish a framework for designing randomized quantum algorithms that achieve reduced query complexities. We begin in Sec.~\ref{sec:MixingLemmaBlockEncoding} by introducing an extension of the mixing lemma for operators block-encoded in unitaries. Then, in Sec.~\ref{sec:StochasticQSP_ssec}, we use this result to develop \emph{stochastic QSP}: we replace a single QSP polynomial with a channel that is a probabilistic mixture of QSP polynomials, each strategically crafted to exploit the mixing lemma and quadratically suppress error. As we show, this furnishes randomized QSP-based algorithms with roughly half the cost of their deterministic counterparts.

\subsection{The Mixing Lemma for Block-Encodings}\label{sec:MixingLemmaBlockEncoding}
As emphasized in Sec.~\ref{sec:QSP}, QSP polynomials are constructed as block-encodings. That is, a QSP polynomial $P(A)$ is encoded in a block of a higher dimensional unitary $U$, and accessed as $P(A) = \Pi U \Pi'$ for some orthogonal projectors $\Pi, \Pi'$. Conventionally, the projectors are taken to be $\Pi = \Pi' = |0\rangle \langle 0 | \otimes I$, such that $P(A)$ is encoded in the $|0\rangle \langle 0 |$ block of the unitary.

To apply the mixing lemma to QSP, it is therefore necessary to establish a variant of the mixing lemma for operators block-encoded in unitary transformations. For the sake of simplicity, we present this theorem for an operator encoded in the $|0\rangle \langle 0|$ block of a unitary:
\begin{lemma}[Mixing Lemma for Block-Encodings: $|0\rangle \langle 0|$ Block]\label{lemma:BlockEncodingMixing} 
    Let $V$ be a unitary that block-encodes a (possibly non-unitary) target operator $S$ as $S = (\langle 0 | \otimes I) V ( |0 \rangle \otimes I) $. Suppose there exist $m$ unitaries $\{ U_j \}_{j=1}^m$ that block-encode operators $R_j$ as $R_j = (\langle 0 | \otimes I)  U_j ( | 0 \rangle \otimes I)$. Also suppose there exists an associated probability distribution $p_j$ such that
    \begin{equation}
    \begin{aligned}
        &\| R_j  - S \| \leq a \text{ for all } j , \\ 
        &\Big\| \sum_{j=1}^m p_j R_j - S \Big\| \leq b .
    \end{aligned}
    \end{equation}
    Then, the corresponding unitary channel $\Lambda(\rho) = \sum_{j=1}^m p_j U_j \rho U_j^\dag$ approximates the action of the channel $\mathcal{V}$ as
    \begin{equation}
        \big\| \bar{\Lambda} - \bar{\mathcal{V}} \big\|_\diamond \leq a^2 + 2b , 
    \end{equation}
    where $\bar{\Lambda}$ and $\bar{\mathcal{V}}$ are channels that access the block-encodings of $\Lambda$ and $\mathcal{V}$ by appending an ancilla qubit and projecting onto $\ket{0}$:
    \begin{equation}
        \begin{aligned}
            & \bar{\Lambda}(\rho) = (\langle 0 | \otimes I) \cdot  \Lambda \big( | 0 \rangle \langle 0 | \otimes \rho \big) \cdot ( |0 \rangle \otimes I) \\
            & \bar{\mathcal{V}}(\rho) = (\langle 0 | \otimes I) \cdot \mathcal{V} \big( | 0 \rangle \langle 0 | \otimes \rho \big) \cdot ( |0 \rangle \otimes I).
        \end{aligned}
    \end{equation}
\end{lemma}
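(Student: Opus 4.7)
My plan is to reduce the block-encoded statement to the same algebraic manipulation that underlies the original Hastings–Campbell mixing lemma, then carefully track how the projector sandwich and the non-unitarity of $R_j$ and $S$ enter.

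First I would unpack the definitions of $\bar{\mathcal{V}}$ and $\bar{\Lambda}$. Since $S=(\langle 0|\otimes I)V(|0\rangle\otimes I)$ and $R_j=(\langle 0|\otimes I)U_j(|0\rangle\otimes I)$, direct evaluation gives $\bar{\mathcal{V}}(\rho)=S\rho S^{\dagger}$ and $\bar{\Lambda}(\rho)=\sum_{j}p_{j}R_{j}\rho R_{j}^{\dagger}$. Thus the block-encoded channels look formally identical to the channels in Lemma~\ref{lemma:Mixing}, except that the operators $R_j, S$ need not be unitary. Crucially they are contractions: because they are blocks of unitaries, $\|R_j\|\leq 1$ and $\|S\|\leq 1$. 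This reduces the claim to a ``contraction version'' of the mixing lemma, which I will then prove by mimicking the unitary proof.

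Next I would write $R_{j}=S+E_{j}$, so that the hypotheses translate to $\|E_{j}\|\leq a$ and $\|\sum_{j}p_{j}E_{j}\|\leq b$. Fix an auxiliary reference system and any density matrix $\rho$ on system plus reference. Expanding,
\begin{equation*}
(\bar{\Lambda}\otimes\mathcal{I})(\rho)-(\bar{\mathcal{V}}\otimes\mathcal{I})(\rho)=(S\otimes I)\rho(E^{\dagger}\otimes I)+(E\otimes I)\rho(S^{\dagger}\otimes I)+\sum_{j}p_{j}(E_{j}\otimes I)\rho(E_{j}^{\dagger}\otimes I),
\end{equation*}
where $E:=\sum_{j}p_{j}E_{j}$. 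The first two ``cross'' terms I bound by the trace-norm Hölder inequality $\|AXB\|_{1}\leq\|A\|\,\|X\|_{1}\,\|B\|$ together with $\|S\|\leq 1$, $\|\rho\|_{1}=1$, and $\|E\|\leq b$, contributing at most $2b$. For the quadratic term I use that each $(E_{j}\otimes I)\rho(E_{j}^{\dagger}\otimes I)$ is positive semidefinite, so its trace norm equals its trace; cyclicity then gives $\operatorname{tr}[(E_{j}^{\dagger}E_{j}\otimes I)\rho]\leq\|E_{j}\|^{2}\leq a^{2}$, and averaging over $j$ preserves this bound. Triangle inequality and taking the supremum over $\rho$ yields $\|\bar{\Lambda}-\bar{\mathcal{V}}\|_{\diamond}\leq a^{2}+2b$.

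The main subtlety, and the place where this proof differs from the standard unitary mixing lemma, is that $R_j, S$ are not unitary, so the quadratic term cannot be handled by a unitarity-based simplification; I must instead exploit positivity to convert a trace-norm bound into a trace, then absorb $E_{j}^{\dagger}E_{j}$ into an operator-norm estimate. The rest of the manipulation, including the move to the diamond norm via tensoring with a reference, is routine. I would also note that the argument generalizes immediately to arbitrary projectors $\Pi,\Pi'$ in place of $|0\rangle\langle 0|\otimes I$, since the only properties of the projection used are $S=\Pi V\Pi'$ and $R_{j}=\Pi U_{j}\Pi'$ together with $\|S\|,\|R_{j}\|\leq 1$.
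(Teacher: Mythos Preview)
Your proof is correct and follows essentially the same route as the paper: reduce $\bar{\Lambda}$ and $\bar{\mathcal{V}}$ to $\sum_j p_j R_j\rho R_j^{\dagger}$ and $S\rho S^{\dagger}$, write $R_j=S+E_j$, and bound the cross and quadratic terms separately using H\"older together with $\|S\|\leq 1$. The paper merely packages the contraction argument as a standalone ``Generalized Mixing Lemma'' (giving $a^2+2b\|S\|$) and bounds the quadratic term via repeated H\"older rather than your positivity-plus-trace trick, but these are cosmetic differences.
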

For brevity, we defer the proof of this theorem to Appendix~\ref{app:Mixing_Lemmas}; there, we also showcase an analogous result for arbitrary block-encodings accessed by projectors $\Pi, \Pi'$. Lemma~\ref{lemma:BlockEncodingMixing} indicates that by implementing the probabilistic mixture of block-encoding unitaries $\Lambda(\sigma) = \sum_{j=1}^m p_j U_j \sigma U_j^\dag$ on an input state $\sigma = |0\rangle \langle 0 | \otimes \rho$, and projecting the block-encoding qubit onto $|0\rangle $ (hence the projectors $| 0 \rangle \otimes I$ and $\langle 0 | \otimes I$), one can reproduce evolution under the target operator $S$ with diamond norm error $a^2 + 2b$. In short, this result is proven by showing that the channel $\bar{\Lambda}(\rho)$ is equal to the probabilistic mixture of block-encoded operators $\sum_{j=1}^m p_j R_j \rho R_j^\dag$, to which the mixing lemma applies. Parallel to the usual mixing lemma, this result enables a quadratic suppression of error by selecting operators $R_j$ and an associated probability distribution $p_j$ such that $a=\epsilon$ and $b = O(\epsilon^2)$.

\subsection{Stochastic QSP}\label{sec:StochasticQSP_ssec}
Lemma~\ref{lemma:BlockEncodingMixing} very naturally applies to QSP. In this context, the target operation is a matrix function: $S = F(A)$, yet the operators we have access to are QSP polynomials: $R_j = P_j(A)$. A common goal is to simulate evolution under $F(A)$ as $\mathcal{F}_A(\rho) = F(A)\rho F(A)^\dag$, which encompasses algorithms such as time evolution, linear systems solvers, and ground state preparation, among many others. Traditionally, one achieves this goal with QSP by finding a suitable polynomial approximation to $F(x)$ as $|P(x)-F(x)|\leq \epsilon$, such that evolving under this polynomial with QSP as $\mathcal{P}_A(\rho) = P(A) \rho P(A)^\dag$ suffers error $ \| \mathcal{P}_A - \mathcal{F}_A \|_\diamond \leq O(\epsilon) $. If $P(x)$ is a degree $d$ polynomial, this procedure requires $d$ queries to the block-encoding of $A$.

Here we exploit the mixing lemma to approximate evolution under $F(A)$ to the same level of accuracy, but at asymptotically half the number of queries to the block-encoding. We achieve this by designing an ensemble of polynomials that each approximate $F(A)$ as $\| P_j(A) - F(A)\| \leq O(\sqrt{\epsilon})$, and an associated probability distribution that obeys $\| \sum_j p_j P_j(A) - F(A) \| \leq O(\epsilon)$. Then, Lemma~\ref{lemma:BlockEncodingMixing} readily implies that the channel $\Lambda_A(\rho) = \sum_j P_j(A) \rho P_j(A)^\dag $ suffers error $\| \Lambda_A - \mathcal{F}_A \|_\diamond \leq O(\epsilon)$. We also show that implementing $\Lambda_A$ requires a number of queries to the block-encoding $\approx d/2 + O(1)$, a cost reduction stemming from the fact that polynomial approximations of smooth functions tend to have degrees that scale as $d = O(\log(1/\epsilon))$ (see Sec.~\ref{sec:PolyApproxSmooth}). Intuitively, this scaling implies that a polynomial that achieves error $O(\sqrt{\epsilon})$ (e.g., $P_j(x)$) has a degree that is asymptotically half that of a polynomial that achieves error $O(\epsilon)$ (e.g., $P(x)$).

\subsubsection{Presentation}

Thus, rather than implement a degree $d$ polynomial, one can instead sample over an ensemble of polynomials of \emph{average degree} $\approx d/2 + O(1)$, while retaining the same level of accuracy. As the corresponding channel is constructed as a probabilistic mixture of QSP sequences, we term this algorithm \emph{Stochastic Quantum Signal Processing}:

\begin{theorem}[Stochastic QSP] \label{thm:StochasticQSP} 
Suppose that $F(x)$ is a bounded function $\|F\|_{[-1,1]} \leq 1$ with a Chebyshev expansion
\begin{equation}
    F(x) = \sum_{n=0}^\infty c_n T_n(x)    
\end{equation}
on the domain $x \in [-1,1]$, where for some degree $d \geq 2$ the coefficients decay as
\begin{equation}\label{eq:poly_coef_decay_condition}
    |c_n| \leq  C e^{- q n} \ \text{for all} \ n \geq  d/2,
\end{equation}
for some constants $C, q > 0$. Suppose furthermore that a degree-$d$ truncation of $F(x)$, $P(x) = \sum_{n=0}^d c_n T_n(x)$, achieves an approximation error of $\epsilon$ as:
\begin{align}
\left| F(x) - P(x) \right| \leq \sum_{n=d+1}^\infty |c_n| =: \epsilon,
\end{align}
such that a QSP implementation of the channel $\mathcal{P}_A(\rho) = P(A) \rho P(A)^\dag$ for some operator $A$ deviates from the target channel $\mathcal{F}_A(\rho) = F(A) \rho F(A)^\dag$ by an error
\begin{equation}
    \| \mathcal{P}_A - \mathcal{F}_A \|_\diamond \leq 2\epsilon = O(\epsilon), 
\end{equation}
while making $d$ queries to the block-encoding of $A$. 

Then there exists an ensemble of polynomials $\{P_j(x)\}$ of degree $\mathrm{deg}(P_j) \leq d$, and an associated probability distribution $p_j$ such that:
\begin{equation}\label{eq:poly_ensemble_errors}
\begin{aligned}
&\left|  P_j(x)  - F (x) \right| \leq 2\sqrt{\epsilon}= O\big(\sqrt{\epsilon}\big) \\
&\Big| \sum_{j=1} p_j P_j(x)  - F(x) \Big| \leq \epsilon = O(\epsilon), 
\end{aligned}
\end{equation}
while the average degree of these polynomials is
\begin{equation}\label{eq:d_avg}
\begin{aligned}
    d_\text{avg} &:= \sum_{j=1} p_j \mathrm{deg}(P_j) \\
    &\leq \frac{d}{2} + \frac{\log(C)}{2q} - \frac{\log(1-e^{-q})}{2q} + \frac{1}{2} + \frac{1}{1-e^{-q}} \\
    & = \frac{d}{2} + O(1).
\end{aligned}
\end{equation}
Therefore, according to the mixing lemma for block-encodings (Lemma~\ref{lemma:BlockEncodingMixing}), the channel $\Lambda_A(\rho) = \sum_{j=1} p_j P_j(A) \rho P_j(A)^\dag$ suffers error
\begin{equation}
    \|\Lambda_A - \mathcal{F}_A \|_\diamond \leq 6 \epsilon = O(\epsilon),
\end{equation}
while making $d_\text{avg} \leq d/2 + O(1)$ queries to a block-encoding of $A$ in expectation. The cost reduction realized by this channel is~\footnote{Here, $\lesssim$ neglects the terms independent of $d$ and $C$ in Eq.~\eqref{eq:d_avg}, which are less relevant than $\log(C)/2q$ in practice. See examples in Sec.~\ref{Sec:Applications}.} 
\begin{equation}
    \frac{d_{\text{avg}}}{d} \lesssim \frac{1}{2}\Big(1+ \frac{\log(C)}{qd} \Big) ,
\end{equation}
which approaches $1/2$ in the limit of large $d$. 
\end{theorem}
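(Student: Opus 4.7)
My plan is to construct the ensemble $\{P_j\}$ explicitly by importance-sampling the high-order tail of the Chebyshev truncation $P(x) = \sum_{n=0}^d c_n T_n(x)$. First I would pick an intermediate degree $d_0 \approx d/2$ and split $P = P_{\mathrm{low}} + \sum_{n=d_0+1}^d c_n T_n$ with $P_{\mathrm{low}}(x) := \sum_{n=0}^{d_0} c_n T_n(x)$. The ensemble is then a base sample $P_0 := P_{\mathrm{low}}$ of degree $d_0$, drawn with probability $p_0$, together with $d - d_0$ enriched samples $P_n := P_{\mathrm{low}} + (c_n/q_n)\, T_n$ of degree $n$, each drawn with probability $q_n$, subject to $p_0 + \sum_n q_n = 1$. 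By construction $\sum_j p_j P_j = P$ exactly, so the mean-approximation bound $|P-F| \le \epsilon$ is inherited directly. Choosing $q_n := |c_n|/\sqrt{\epsilon}$ makes the stochastic excursion $|c_n/q_n|\,|T_n|$ at most $\sqrt{\epsilon}$, so by the triangle inequality each $|P_j - F|$ is bounded by the head truncation error $\sum_{m \ge d_0+1}|c_m|$ plus $\sqrt{\epsilon}$.

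Next I would invoke the geometric decay $|c_n| \le Ce^{-qn}$ (valid for $n \ge d/2$, hence for $n > d_0$) to bound the head error by $Ce^{-q(d_0+1)}/(1-e^{-q})$. Taking $d_0$ just large enough for this quantity to be below $\sqrt{\epsilon}$ produces a per-sample error of $2\sqrt{\epsilon}$ and simultaneously guarantees $\sum_n q_n = \sum_n |c_n|/\sqrt{\epsilon} \le 1$, so $p_0 \ge 0$. Comparing with the worst-case estimate $\sqrt{\epsilon} \le \sqrt{Ce^{-q(d+1)}/(1-e^{-q})}$ implied by the same decay yields $d_0 \le (d-1)/2 + (1/2q)\log[C/(1-e^{-q})] + 1$. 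The average degree then decomposes as
\begin{equation*}
d_{\mathrm{avg}} = p_0 d_0 + \sum_{n=d_0+1}^d q_n\, n = d_0 + \sum_{n=d_0+1}^d q_n (n-d_0),
\end{equation*}
and the residual evaluates to $(Ce^{-qd_0}/\sqrt{\epsilon})\cdot e^{-q}/(1-e^{-q})^2$ via a geometric series in $m := n-d_0$. The prefactor is at most $e^q - 1$ by the very constraint that fixed $d_0$, so the residual contributes at most $1/(1-e^{-q})$, reproducing Eq.~\eqref{eq:d_avg}. The diamond-norm statement then follows immediately from Lemma~\ref{lemma:BlockEncodingMixing} with $a = 2\sqrt{\epsilon}$ and $b = \epsilon$, giving $\|\Lambda_A - \mathcal{F}_A\|_\diamond \le a^2 + 2b = 6\epsilon$.

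The hard part will be orchestrating the constants so that every requirement holds simultaneously: $d_0$ must be large enough to push the head truncation error below $\sqrt{\epsilon}$, yet small enough to preserve $d_0 \le d/2 + O(1)$, while the importance weights $q_n$ must normalize to a valid distribution without inflating any individual $|c_n|/q_n$ past $\sqrt{\epsilon}$. This delicate balance is precisely why the theorem hypothesizes geometric decay already from $n \ge d/2$ rather than only from $n \ge d$, and it is what pins the additive $O(1)$ slack in Eq.~\eqref{eq:d_avg} to the explicit $\log(C)/(2q)$, $-\log(1-e^{-q})/(2q)$, $1/2$, and $1/(1-e^{-q})$ contributions.
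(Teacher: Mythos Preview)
Your proposal is correct and essentially mirrors the paper's proof: both truncate at a cutoff $d^{*}\approx d/2$, keep the low-order part deterministically, and importance-sample one higher-order Chebyshev term at a time so that the ensemble mean reproduces the degree-$d$ truncation $P^{[d]}$ exactly, after which Lemma~\ref{lemma:BlockEncodingMixing} with $a=2\sqrt{\epsilon}$, $b=\epsilon$ gives the $6\epsilon$ diamond bound. The only structural difference is that you retain a base sample $P_0 = P_{\mathrm{low}}$ with leftover probability $p_0 = 1-\sum_n |c_n|/\sqrt{\epsilon}$, whereas the paper normalizes $p_j \propto |c_{d^{*}+j}|$ to sum to one with no base sample; correspondingly, you bound the residual $\sum_n q_n(n-d_0)$ by a direct geometric-series estimate, while the paper compares its $\{p_j\}$ to a truncated geometric distribution---both routes land on the same $d_{\mathrm{avg}}$ bound in Eq.~\eqref{eq:d_avg}.
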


\begin{proof}
    The underlying goal of this theorem is to simulate evolution under $F(A)$ according to the channel $\mathcal{F}_A(\rho) = F(A) \rho F(A)^\dag$. For notational convenience, let us define the degree $t$ polynomial truncation of the Chebyshev series of $F(x)$ as
    \begin{equation}
        P^{[t]}(x) := \sum_{n=0}^{t} c_n T_n(x).
    \end{equation}
    By the assumption that $|c_n| \leq C e^{-qn} $ for $n \geq d/2 $, we find that for $t \geq d/2 $, this polynomial truncation suffers an error over $x \in [-1,1]$:
    \begin{equation}\label{eq:poly_truncation_error}
    \begin{aligned}
        & \big| P^{[t]}(x) - F(x) \big| = \Bigg|\sum_{n=t+1}^\infty c_n T_n(x) \Bigg| \\
        &\leq \sum_{n=t+1}^\infty |c_n| \leq  \sum_{n=t+1}^\infty C e^{-qn} = \frac{Ce^{-qt}}{1-e^{-q}} .  
    \end{aligned}
    \end{equation}
    Therefore, the error suffered by the degree $d$ truncation $P^{[d]}(x) = P(x)$ is at most
    \begin{equation}\label{eq:epsilon_expression}
        |P^{[d]}(x) - F(x) | \leq \frac{Ce^{-qd}}{1-e^{-q}} =: \epsilon.
    \end{equation}
    
    To estimate evolution under $\mathcal{F}_A$, one can use QSP to implement the channel $\mathcal{P}_A(\rho) = P^{[d]}(A) \rho P^{[d]}(A)^\dag$. Using the diamond norm bound of Eq.~\eqref{eq:diamond_distance_bound}, we see that $\mathcal{P}_A$ suffers error 
    \begin{equation}
    \begin{aligned}
        \| \mathcal{P}_A - \mathcal{F}_A \|_\diamond & \leq 2 \| P^{[d]}(A) - F(A) \| \\
        & \leq 2 \max_{x\in[-1,1]} \big| P^{[d]}(x) - F(x) \big| \\
        & \leq 2 \epsilon. 
    \end{aligned}
    \end{equation}
    As $P^{[d]}(A)$ is a degree $d$ polynomial, the cost of implementing $\mathcal{P}_A$ is $d$ queries to the block-encoding of $A$. 

    Our goal is to reproduce evolution under $\mathcal{F}_A$ to diamond norm error $O(\epsilon)$ by using a probabilistic mixture of QSP polynomials and invoking the mixing lemma. We construct these polynomials by truncating the original series at a cutoff degree $d^*$, selected as follows. Because the mixing lemma quadratically suppresses error, and the degree scales as $\log(1/\epsilon)$, there should exist an ensemble of polynomials of degree around $d^* \approx d/2$ whose corresponding channel suffers the same error as the original degree $d$ polynomial. We can then readily determine $d^*$ by demanding that it be the smallest integer suffering error at most $\sqrt{\epsilon}$:
    \begin{equation}
    \begin{aligned}
        & \Bigg( \frac{Ce^{-qd^*}}{1-e^{-q}} \Bigg)^2 \leq \epsilon = \frac{Ce^{-qd}}{1-e^{-q}} \quad \Rightarrow \\
        & d^* = \Bigg\lceil \frac{d}{2} + \frac{\log(C)}{2q} - \frac{\log(1-e^{-q})}{2q} \Bigg\rceil \\
        & \ \ \  = \frac{d}{2} + O(1).
        \label{eqn:dstarselect}
    \end{aligned}
    \end{equation}
    Because the error suffered by a single polynomial obtained by truncating at $d^*$ is greater than that at $d$, we have $d^* < d$. 

    Next, let us define our ensemble of polynomials as
    \begin{equation}\label{eq:polynomial_ensemble}
    \begin{aligned}
        &P_j(x) = 
        \begin{cases}
            P^{[ d^* ]}(x) + \frac{c_{d^*+j}}{p_j} T_{d^* +j}(x) & |c_{d^* + j}| > 0 \\ 
            0 & c_{d^* +j} =0
        \end{cases} 
    \end{aligned}
    \end{equation}
    for $j=1,2,..., d-d^*$, where $p_j$ is the associated probability distribution defined as
    \begin{equation}
        p_j = \frac{|c_{ d^* +j}|}{\sum_{k=1}^{d-d^* } |c_{d^* +k}|}. \label{eqn:probabilities}
    \end{equation}
    Intuitively, each polynomial $P_j(x)$ consists of the degree $d^* $ truncation $P^{[ d^* ]}(x)$, and an additional higher order Chebyshev polynomial chosen such that the average polynomial is the degree $d$ truncation: 
    \begin{equation}
        \sum_{j=1}^{ d-d^* } p_j P_j(x) = P^{[d]}(x).
    \end{equation} 
    The distribution $p_j$ is chosen such that terms with larger Chebyshev coefficients are given more probability mass and are preferentially sampled. Each polynomial in this ensemble is guaranteed to suffer error
    \begin{equation}
    \begin{aligned}
        |P_j(x) - F(x)| & \leq |P^{[ d^* ]}(x) - F(x) | + \frac{|c_{ d^* +j}|}{p_j} \\
        &\leq \sum_{n= d^* +1}^{\infty} |c_{n}| + \sum_{k=1}^{d-d^*} |c_{ d^* +k}|\\
        & \leq 2 \sum_{n=d^* +1}^{\infty} |c_n| \ \leq \  2 \sum_{n=d^* +1}^{\infty} C e^{-qn} \\
        & = \frac{2Ce^{-q d^* }}{1-e^{-q}} \leq 2\sqrt{\epsilon} = O(\sqrt{\epsilon}). 
    \end{aligned}
    \end{equation}
    This bound also implies that  $P_j(x)$ is bounded as $\| P_j \|_{[-1,1]} \leq 1+2\sqrt{\epsilon} $. On the other hand, the average polynomial suffers error
    \begin{equation}
        \Bigg| \sum_{j=1}^{ d-d^*} p_j P_j(x) - F(x) \Bigg| = \big| P^{[d]}(x) - F(x) \big| \leq \epsilon.
    \end{equation}

    In the language of the mixing lemma for block-encodings (Lemma~\ref{lemma:BlockEncodingMixing}), this corresponds to values $a = 2\sqrt{\epsilon} $ and $b = \epsilon$. Accordingly, the channel $\Lambda_A(\rho) = \sum_{j=1}^{d-d^*} p_j P_j(A) \rho P_j(A)^\dag $ suffers error
    \begin{equation}
        \| \Lambda_A - \mathcal{F}_A \|_\diamond \leq a^2 + 2b = 6 \epsilon = O(\epsilon) .
    \end{equation}

    Lastly, the expected cost of instantiating $\Lambda_A(\rho)$ is the average degree $d_{\text{avg}} = \sum_{j=1}^{ d-d^* } p_j \deg(P_j)$, which corresponds to the average number of queries to the block-encoding. Note that the degrees of these polynomials are $\deg(P_j) = d^* +j \leq d$. To evaluate the average degree, recall that this theorem assumes that 
    \begin{equation}
        |c_n| \leq  C e^{- q n} \ \text{for all} \ n \geq d/2 ,
    \end{equation}
    This implies that the mean of the distribution $p_j := \frac{|c_{d^* +j}|}{\sum_{k=1}^{d-d^* } |c_{d^* +k}|} $ is upper bounded by the mean of the geometric distribution $\tilde{p}_j := \frac{e^{-q(d^* +j)}}{\sum_{k=1}^{d-d^* } e^{-q( d^* +k)} } = \frac{e^{-qj}}{ \sum_{k=1}^{d-d^*} e^{-q k} }$. Hence, we may upper bound $d_{\text{avg}}$ as  
    \begin{equation}
    \begin{aligned}
        d_{\text{avg}} &= \sum_{j=1}^{ d-d^*} p_j \deg(P_j) = d^* + \sum_{j=1}^{ d-d^*} j p_j \\
        &\leq d^* + \sum_{j=1}^{ d-d^*} j \tilde{p}_j \\
        & = d^* - (d-d^*) \frac{e^{-q (d-d^*)}}{1-e^{-q (d-d^*) }} + \frac{1}{1-e^{-q}} \\
        & \leq d^* + \frac{1}{1-e^{-q}} \\
        & = \Bigg\lceil \frac{d}{2} + \frac{\log(C)}{2q} - \frac{\log(1-e^{-q})}{2q} \Bigg\rceil + \frac{1}{1-e^{-q}} \\
        & \leq \frac{d}{2} + \frac{\log(C)}{2q} - \frac{\log(1-e^{-q})}{2q} + \frac{1}{2} + \frac{1}{1-e^{-q}} \\
        & = \frac{d}{2} + O(1)
    \end{aligned}
    \end{equation}
    where line 3 follows from evaluating mean of the geometric distribution (i.e., evaluating a geometric series), and line 4 from $d^* < d$. Accordingly, the cost reduction realized by $\Lambda_A$ is the ratio 
    \begin{equation}
    \begin{aligned}
        \frac{d_{\text{avg}}}{d} &\leq \frac{1}{2}\Bigg(1+ \frac{\log(C)}{qd} - \frac{\log(1-e^{-q})}{qd} + \frac{1}{d} + \frac{2}{(1-e^{-q}) d }  \Bigg) \\ \label{eqn:davgratio}
        & \approx \frac{1}{2}\Big(1+ \frac{\log(C)}{qd} \Big)
    \end{aligned}
    \end{equation}
    where the last line follows from the fact that the $\log(C)$ contribution dominates in practice (see Sec.~\ref{Sec:Applications}). In the large $d$ limit, the cost reduction approaches $1/2$ inverse-polynomially fast.  
    
    Lastly, as this construction makes no reference to the eigenvalues or singular values of $A$, stochastic QSP applies equally as well to QET and QSVT, where $P(A)$ acts on the eigenvalues or singular vectors of $A$, respectively. In addition, while the presentation of stochastic QSP here is tailored toward functions expressed in the basis of Chebyshev polynomials, we extend this result to more general functions and arbitrary arbitrary bases in Sec.~\ref{sec:Generalizations}.
\end{proof}

\subsubsection{Interpretation}

Let us take a minute to interpret this result. According to Theorem~\ref{thm:StochasticQSP}, stochastic QSP replaces a deterministic polynomial $P(x)$ with an ensemble of polynomials $\{ P_j(x) \}$ and probability distribution $p_j$, whose average evolution achieves the same precision up to a constant factor, but at asymptotically half the cost. Importantly, this result is agnostic to the specific polynomial, requiring only that its coefficients decay exponentially according to Eq.~\eqref{eq:poly_coef_decay_condition}. As we showed in Sec.~\ref{sec:PolyApproxSmooth}, this condition is generally satisfied by polynomial approximations to smooth functions, rendering stochastic QSP applicable to a wide range of algorithms. 
For visual intuition on this exponential decay, we provide an illustration of our stochastic QSP construction in Fig.~\ref{fig:Stochastic_QSP_Illustration}.
In practice the values of $C$ and $q$ in Eq.~\eqref{eq:poly_coef_decay_condition} can be chosen to minimize the ratio $d_\text{avg}/d$, which effectively means minimizing $\log(C)/q$.

\begin{figure}
    \centering
    \includegraphics[width=0.98\linewidth]{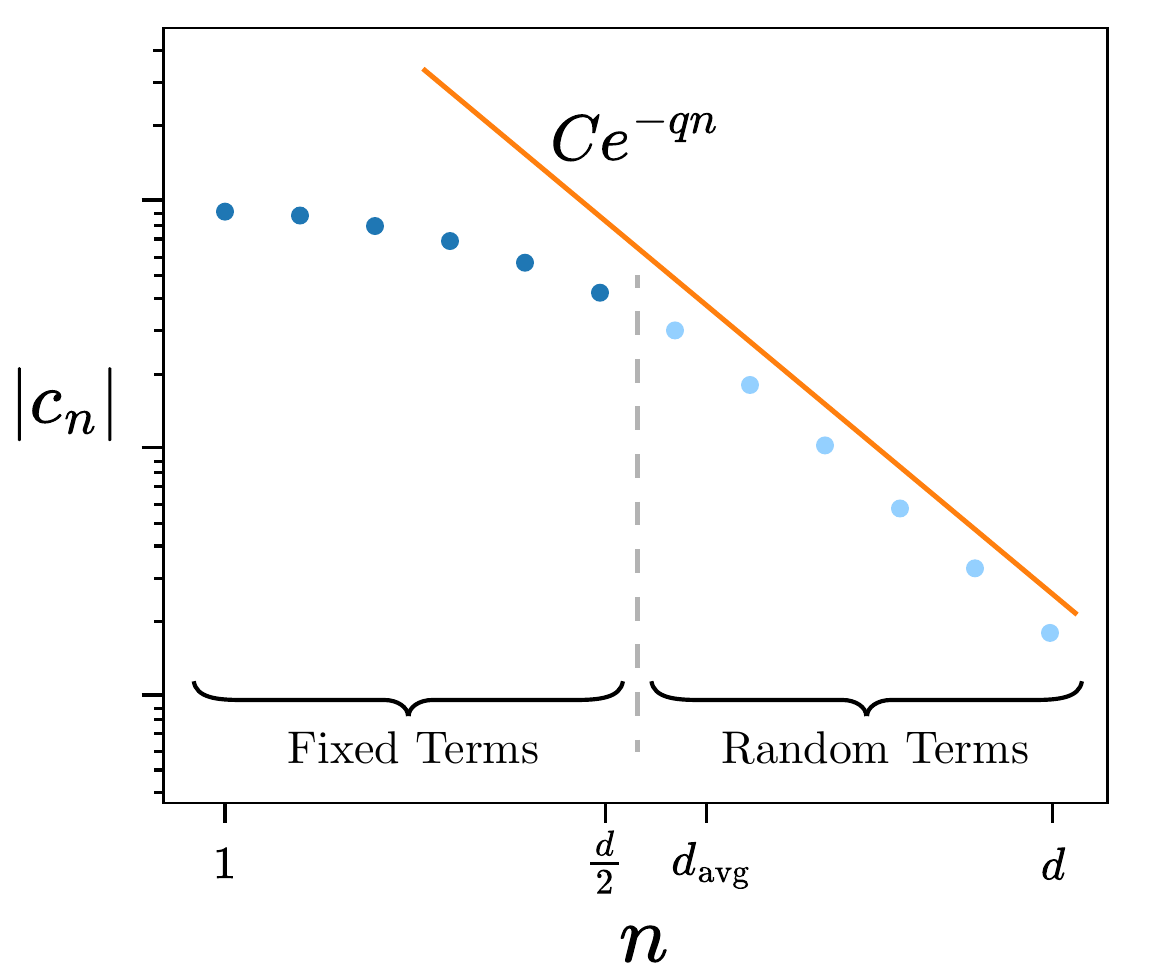}
    \caption{An illustration of our stochastic QSP construction. According to Theorem~\ref{thm:StochasticQSP}, we are provided a function whose Chebyshev coefficients $c_n$ decay exponentially as $|c_n| \leq Ce^{-qn}$ for $n\geq d/2+1$. Then, by defining a cutoff degree $\approx d/2$ (see Eq.~\eqref{eqn:dstarselect}), we can design an ensemble of polynomials that consists of all lower order terms $n\leq d/2$, and a single higher order term $d/2<n\leq d$ that is randomly sampled (see Eq.~\eqref{eq:polynomial_ensemble}). The average degree of these polynomials is $d_{\text{avg}} = \frac{d}{2}+O(1)$.}
    \label{fig:Stochastic_QSP_Illustration}
\end{figure}

The channel implemented by stochastic QSP is the probabilistic mixture of polynomials $\Lambda_A(\rho) = \sum_j p_i P_j(A) \rho P_j(A)$. As we discussed in Sec.~\ref{sec:MixingLemmaBlockEncoding}, this channel may be realized by implementing an identical probabilistic mixture of unitaries $\{ U_j \}$ that block-encode the polynomials $\{ P_j(A) \}$, and post-selecting on successfully accessing these block-encodings. In practice, this can be achieved by independently sampling $j \sim p_j$, and implementing the QSP sequence that block-encodes the polynomial $P_j(A)$. Because $P_j(x)$ is only bounded as $\| P_j \|_{[-1,1]} \leq 1+2\sqrt{\epsilon} $ according to Eq.~\eqref{eq:poly_ensemble_errors}, this implementation may require rescaling the polynomials by $1+2\sqrt{\epsilon}$, which incurs a measurement overhead $\sim (1+2\sqrt{\epsilon})^2$ that asymptotically approaches $1$.

This procedure of course requires knowledge of the QSP phases for each polynomial $P_j(A)$. These phases can be determined classically in a pre-computation step by using an efficient phase finding algorithm, such as those of Refs.~\cite{Haah_2019, chao2020finding, Dong_2021, Ying_2022, yamamoto2024robust}, and the associated circuits can similarly be pre-compiled. Because the ensemble consists of $\approx d/2$ polynomials, each of degree at most $d$, the total storage required is $O(d^2)$ QSP phases. 

Moreover, observe from Eq.~\eqref{eq:polynomial_ensemble} that each polynomial in the ensemble is constructed as the degree $d^* \approx d/2  $ truncation plus a higher order term in the Chebyshev expansion, up to degree $d$. Inclusion of the degree $d^*$ truncation guarantees that the first condition of the mixing lemma is satisfied with error $O(\sqrt{\epsilon})$, and sampling the higher order terms allows the second condition of the mixing lemma to be satisfied with error $O(\epsilon)$. The specific sampling distribution is chosen to be proportional to the coefficients of the Chebyshev expansion. Because these are assumed to decay exponentially, the corresponding probability mass is concentrated around $\approx d/2$, and so too is the average degree. Through this view, the ensemble of stochastic QSP can be seen as a fixed low order term plus higher order correction terms, which are importance sampled to leverage randomized compiling. 

Lastly, it is important to note that while stochastic QSP reduces the expected cost to $d_{\text{avg}} \approx d/2 +O(1)$, it does not reduce the maximum degree of the polynomials implemented: some polynomials in the ensemble will have degree greater than $d_{\text{avg}}$, with one even having degree $d$. This however is unavoidable. In fact it is necessary for the ensemble to contain polynomials of degree $> d/2$ in order to attain a level of error equivalent to a degree-$d$ polynomial. To see this, observe that if all the polynomials in the ensemble had degree at most $k < d$, then their average $\sum_j p_j P_j(x)$ would also be a degree $k<d$ polynomial. This average polynomial would not be able to achieve a precision equivalent to that of a degree $d$ polynomial, thus failing to meet the second condition of the mixing lemma. Note however that the distribution $p_j$ concentrates around small values of $j$, meaning that degrees much larger than $\approx d/2$ are rare.

Through this interpretation, stochastic QSP is similar to the ``semi-quantum matrix processing" algorithm of Ref.~\cite{tosta2023randomized} for estimating expectation values and matrix elements of a matrix-valued function. The authors achieve this by decomposing a degree-$d$ polynomial approximation of this function into the Chebyshev basis, and sampling its constituent Chebsyhev polynomials, and measuring an estimator of the sampled polynomial that converges to the desired expectation value/matrix element. Like stochastic QSP, this procedure can reduce the expected cost of certain algorithms to a value $< d$ if the Chebyshev coefficients decay quickly, but it does not reduce the maximal degree because the order $d$ Chebyshev polynomial can still be sampled. However, this method differs from stochastic QSP in two key ways. First, Ref.~\cite{tosta2023randomized} employs a quasi-probability technique where improvements in average degree are traded for additional variance in the measurements, whereas stochastic QSP always implements a normalized probability distribution. Secondly, the constructions in Ref.~\cite{tosta2023randomized} explicitly consider a target observable $O$ and modify their circuit to a accommodate a parity measurement $Z \otimes O$ with an ancilla register. Hence, their construction does not constitute an approximate realization of a desired matrix function, whereas stochastic QSP indeed approximates evolution under said matrix function, thus rendering our approach more modular.

\subsubsection{Utility in Quantum Algorithms}
Stochastic QSP can be integrated into larger quantum algorithms by replacing a deterministic QSP operation with the corresponding stochastic QSP channel. This remains true despite the stochastic QSP channel being incoherent, as it is a probabilistic mixture of operations.

As we prove in Theorem~\ref{thm:StochasticQSP}, the stochastic QSP channel closely approximates the target channel in the diamond norm, where the target channel is coherent. Therefore, the stochastic QSP channel can be substituted in for the target channel in a larger quantum algorithm, incurring only an $O(\epsilon)$ error. This is analogous to the use of ordinary QSP, which also approximates the target channel to within $O(\epsilon)$ error. The crucial advantage of stochastic QSP, however, is its efficiency: it achieves this approximation while requiring asymptotically half the number of queries as ordinary QSP. 

This result holds even in more complex scenarios. For instance, if the target operation is controlled by an ancilla qubit (e.g., in the QSP-based phase estimation algorithm of Ref.~\cite{martyn2021grand}), then stochastic QSP can approximate the controlled target operation by an analogous mixture of controlled QSP operations. This result relies on an extension of the mixing lemma to controlled operations, which we prove in Appendix~\ref{app:Mixing_Lemmas}.

Similarly, stochastic QSP is fully compatible with algorithms that use the entire QSP unitary that block-encodes $P(A)$. We show in Section~\ref{sec:Generalizations} that the ensemble of unitaries used in stochastic QSP approximates not just the block-encoded operator $P(A)$, but also the entire QSP unitary operator $U_{\vec{\phi}}[A]$ that encodes $P(A)$. Examples of such algorithms include the ground state energy estimation and ground state projection algorithms considered in Ref.~\cite{lin2020optimal}, as well as the use of amplitude amplification where the projector is constructed through a polynomial approximation $\Pi = P(A)$~\cite{Gily_n_2019}.

\section{Generalizations}\label{sec:Generalizations}
In the previous section, we introduced stochastic QSP tailored specifically to polynomial approximations obtained from truncated Chebyshev expansions. However, it turns out that this specialization is not necessary, and that stochastic QSP is readily generalized to a much broader range of use cases. In this section, we show how our method applies to both definite- and indefinite-parity polynomials, Taylor series, trigonometric polynomials, generalized QSP \cite{motlagh2024generalized}, and approximations
of the entire QSP unitary.

\subsection{Definite and Indefinite Parity}

The statement of Theorem~\ref{thm:StochasticQSP} applies to functions $F(x)$ of indefinite parity\footnote{Recall that a function is said to have definite parity if it is either even or odd. Otherwise it has indefinite parity.} and produces an ensemble of polynomials $\{ P_j \}$ that also have indefinite parity. However, if $F(x)$ has definite parity, it turns out that this construction preserves the parity.

The parity of the implemented polynomials $P_j$ is important because definite-parity polynomials admit simpler implementations via QSP. Indeed, by construction QSP can only produce polynomials of definite parity (see Eqs.~\eqref{eq:QSP_seqeunce} and~\eqref{eq:qsp_conditions}). In general, indefinite parity polynomials require using a linear combination of unitaries circuit, which demands an extra ancilla qubit and rescales the resulting block-encoding by 1/2. This rescaling can be undesirable in algorithm construction because it may necessitate amplitude amplification to be corrected, and can be avoided when the target function $F(x)$ has definite parity in the first place. We would like to retain this optimized performance in the context of stochastic QSP, which we can show is indeed the case.

\begin{corollary} 
Consider the setting of Theorem~\ref{thm:StochasticQSP}, but also suppose that $F(x)$ has definite party. Then there exists an ensemble of polynomials $\{P_j\}$ with the same parity that satisfy the conditions of the theorem.
\end{corollary}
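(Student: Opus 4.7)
The plan is to show that the ensemble $\{P_j\}$ constructed in the proof of Theorem~\ref{thm:StochasticQSP} \emph{already} preserves parity when $F(x)$ does, so no modification of the construction is needed. The key observation is that $T_n(x)$ has parity $n \bmod 2$, and the Chebyshev expansion is a decomposition into basis functions of definite parity. If $F(x)$ is even (resp.\ odd), then its Chebyshev coefficients satisfy $c_n = 0$ for every odd (resp.\ even) $n$, since orthogonality of the Chebyshev basis forces the coefficients of the ``wrong-parity'' $T_n$ to vanish.

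First I would record this parity-selection fact and then trace through the construction of Theorem~\ref{thm:StochasticQSP} step by step. The degree-$d^{*}$ truncation $P^{[d^{*}]}(x) = \sum_{n=0}^{d^{*}} c_n T_n(x)$ is a linear combination of $T_n$ whose nonzero terms all share the parity of $F(x)$, so $P^{[d^{*}]}$ itself has the same parity as $F$. Next, recall from Eq.~\eqref{eq:polynomial_ensemble} that $P_j(x) = P^{[d^{*}]}(x) + \tfrac{c_{d^{*}+j}}{p_j} T_{d^{*}+j}(x)$ whenever $|c_{d^{*}+j}| > 0$, and $P_j(x) = 0$ (with $p_j = 0$ by Eq.~\eqref{eqn:probabilities}) otherwise. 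For every $j$ such that $d^{*}+j$ has the opposite parity to $F$, we have $c_{d^{*}+j} = 0$, hence $p_j = 0$ and these indices contribute nothing to the mixture. The only indices $j$ that appear with positive probability satisfy $c_{d^{*}+j} \neq 0$, forcing $d^{*}+j$ to have the same parity as $F$, so $T_{d^{*}+j}$ inherits that parity, and therefore $P_j(x)$, as a sum of two polynomials of matching parity, also has the parity of $F$.

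Finally, I would point out that all the error and cost bounds of Theorem~\ref{thm:StochasticQSP} carry over unchanged, because we are considering \emph{literally the same} ensemble and distribution as before; the parity hypothesis only guarantees that every element actually drawn from $p_j$ is a definite-parity polynomial matching $F$. In particular, $\sum_j p_j P_j(x) = P^{[d]}(x)$, the single-sample error bound $|P_j(x) - F(x)| \leq 2\sqrt{\epsilon}$, the average error $|\sum_j p_j P_j(x) - F(x)| \leq \epsilon$, and the average degree bound $d_{\text{avg}} \leq d/2 + O(1)$ all follow verbatim from the proof of Theorem~\ref{thm:StochasticQSP} and Lemma~\ref{lemma:BlockEncodingMixing}. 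There is no main obstacle here — the corollary is essentially a remark that the construction respects the parity grading of the Chebyshev basis — so the only real care needed is to verify this grading argument cleanly, which I would do in a single short paragraph following the logic above.
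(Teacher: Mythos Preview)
Your proposal is correct and follows essentially the same argument as the paper: the paper's proof observes that a definite-parity $F$ has Chebyshev coefficients supported only on the matching-parity indices, and that each $P_j$ in the construction is supported on a subset of those indices, hence inherits the parity. Your version simply unpacks this support argument more explicitly by tracing through $P^{[d^*]}$ and the added term $T_{d^*+j}$, but the idea is identical.
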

\begin{proof} If $F(x) = \sum_{n=0}^\infty c_n T_n(x)$ has even (odd) parity then all $c_n$ for odd (even) $n$ must vanish. In other words, the function is only supported on even (odd) Chebyshev polynomials. Observe that all $P_j$ in the construction are supported only on subsets of the support of $F(x)$. Hence they must also be even (odd).
\end{proof}

Note that this corollary also extends to complex-valued functions, which are usually approximated by QSP by decomposition into their real and imaginary components. Therefore, for an arbitrary target function, realizing stochastic QSP requires a circuit no more complicated than a QSP circuit that approximates the target function.

\subsection{Taylor Series}\label{sec:StochasticQSPArbBases}

As we discussed in Sec.~\ref{sec:PolyApproxSmooth}, Chebyshev expansions of a smooth $C^\infty$ functions admit exponentially-decaying coefficients, and thus yield polynomial approximations that meet the requisite conditions for stochastic QSP. Functions of interest in the quantum algorithms literature commonly exhibit this smoothness property (like $\cos(x)$ or $\exp(-\beta x)$), or are well-approximated by functions that do (like how the step function is approximated by $\text{erf}(kx)$). 

As such, we often desire a closed-form expression for the coefficients in the Chebyshev expansion, allowing us to give concrete guarantees for the values of $C$ and $q$ required by Theorem~\ref{thm:StochasticQSP}. However, for certain functions like $\sqrt{x}$ and $-x\ln(x)$, which are only smooth in certain domains, obtaining a closed form expression for the Chebyshev coefficients is cumbersome, and the literature generally works with Taylor series instead (see for example Theorem~68 of \cite{Gily_n_2019} and its applications). Fortunately, stochastic QSP directly generalizes to Taylor series, and expansions into bases of bounded polynomials more generally. 

\begin{corollary} \label{cor:basisfuncs} Suppose $\{ B_n (x) \} $ are a collection of basis functions of degree $n$ respectively, which are all bounded as $\|B_n\|_{[-1,1]} \leq 1$. Then the statement of Theorem~\ref{thm:StochasticQSP} holds with $B_n(x)$ in place of $T_n(x)$.
\end{corollary}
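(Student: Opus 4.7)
The plan is to observe that the proof of Theorem~\ref{thm:StochasticQSP} uses Chebyshev polynomials only through two properties: (i) they form an expansion basis in which $F$ has exponentially decaying coefficients, and (ii) each $T_n$ is uniformly bounded by $1$ on $[-1,1]$. Property (i) is replaced by the hypothesis $F(x)=\sum_n c_n B_n(x)$ together with the same coefficient decay $|c_n|\leq Ce^{-qn}$ for $n\geq d/2$, and property (ii) is exactly the assumption $\|B_n\|_{[-1,1]}\leq 1$ of Corollary~\ref{cor:basisfuncs}. So the strategy is to re-run the proof of Theorem~\ref{thm:StochasticQSP} line-by-line with $T_n\mapsto B_n$ and verify that no Chebyshev-specific feature intervenes.

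First I would redefine the partial sum $P^{[t]}(x):=\sum_{n=0}^t c_n B_n(x)$ and note that $\deg(P^{[t]})\leq t$ still holds because $\deg(B_n)=n$. The polynomial $P^{[t]}$ is just a polynomial, and QSP is agnostic to which basis was used to derive it, so a QSP implementation still costs at most $t$ queries to the block-encoding (in practice one would re-express $P^{[t]}$ in whichever basis the phase-finding routine uses). Applying the triangle inequality and $|B_n(x)|\leq 1$ reproduces Eq.~\eqref{eq:poly_truncation_error} as
$$\bigl|P^{[t]}(x)-F(x)\bigr|=\Bigl|\sum_{n=t+1}^\infty c_n B_n(x)\Bigr|\leq \sum_{n=t+1}^\infty |c_n|,$$
which is exactly the quantity on which the choice of the cutoff $d^*$ in Eq.~\eqref{eqn:dstarselect} and the definition of $\epsilon$ in Eq.~\eqref{eq:epsilon_expression} depend.

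Next I would carry over the ensemble construction of Eq.~\eqref{eq:polynomial_ensemble} and the probabilities of Eq.~\eqref{eqn:probabilities} with $B_{d^*+j}$ in place of $T_{d^*+j}$. The telescoping identity $\sum_{j=1}^{d-d^*} p_j P_j(x)=P^{[d]}(x)$ is a purely algebraic consequence of the definition and holds in any basis, and the per-sample bound $|P_j(x)-F(x)|\leq 2\sqrt{\epsilon}$ again follows by triangle inequality plus $|B_n(x)|\leq 1$. Since the distribution $p_j$ and the degrees $\deg(P_j)=d^*+j$ are unchanged from the Chebyshev proof, the average-degree bound in Eq.~\eqref{eq:d_avg} is identical, and applying Lemma~\ref{lemma:BlockEncodingMixing} with $a=2\sqrt{\epsilon}$, $b=\epsilon$ yields the same diamond-norm bound $\|\Lambda_A-\mathcal{F}_A\|_\diamond\leq 6\epsilon$.

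The only real thing to check, and what I would expect to be the main obstacle, is that no hidden Chebyshev-specific feature (orthogonality, the integral formula for $c_n$, or the near-optimal uniform-approximation property of truncated Chebyshev series) was silently invoked in Theorem~\ref{thm:StochasticQSP}. A careful audit confirms that it was not: the coefficient decay $|c_n|\leq Ce^{-qn}$ is taken as a hypothesis and never derived from any specific integral representation, and every inequality in the proof flows from $|c_n|$-bounds combined with $|T_n(x)|\leq 1$. Hence the argument extends verbatim to Taylor series ($B_n(x)=x^n$, bounded by $1$ on $[-1,1]$), to Chebyshev polynomials of the second kind rescaled to unit norm, and to any other polynomial basis satisfying $\|B_n\|_{[-1,1]}\leq 1$, completing the corollary.
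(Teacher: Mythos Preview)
Your proposal is correct and follows essentially the same approach as the paper, which simply notes in one sentence that the only property of the Chebyshev polynomials used in the proof of Theorem~\ref{thm:StochasticQSP} is the bound $\|T_n\|_{[-1,1]}\leq 1$. Your write-up is a more detailed audit of that same observation, additionally making explicit the use of $\deg(B_n)=n$ for the degree counts.
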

\begin{proof} This follows from the fact that the only property of the Chebyshev polynomials $T_n(x)$ leveraged in the proof of Theorem~\ref{thm:StochasticQSP} is that they were bounded as $\| T_n\|_{[-1,1]} = 1$.
\end{proof}

Taylor series methods derive their accuracy from the analysis in Corollary~66 of Ref.~\cite{Gily_n_2019}. The basic idea is the following. Suppose $F(x)$ is analytic and bounded on the interval $[-1,1]$, and our goal is to approximate it by a polynomial on the interval $[-1+\delta, 1-\delta]$ for small $\delta < 1$. If $F(x) = \sum_{n=0}^\infty c_n x^n$ is the Taylor series of $F(x)$ with coefficients $|c_n| \leq 1$, then the error from truncating at degree $d$ is:
\begin{align}
\sup_{|x| \leq 1-\delta}  \left|\sum_{n=d+1}^\infty c_n x^n \right| \leq \sum_{n=d+1}^\infty  (1-\delta)^n.
\end{align}
We immediately obtain an exponential decay in truncation error. 

This is a slightly weaker condition than the one required for Corollary~\ref{cor:basisfuncs}; we require an exponential bound on the coefficients themselves rather than on the truncation error. But if we are willing to approximate the stretched function $F( (1-\delta) x )$ on the interval $[-1,1]$ instead, then substituting the Taylor expansion yields
\begin{align}
F((1-\delta)x) = \sum_{n=0}^\infty c_n (1-\delta)^n x^n 
\end{align}
where the new coefficients $c_n (1-\delta)^n $ exhibit the desired exponential decay, thus rendering this Taylor series expansion amenable to parallel QSP.

\subsection{Trigonometric Polynomials}
Recent works \cite{Dong_2022, Katabarwa2023EarlyFQ} have considered a model of quantum computation in which a constant-size control register is strongly coupled to many qubits with an otherwise local connectivity graph. In such an architecture, controlled time evolution can be implemented through the Trotter approximation, but Hamiltonian simulation via QSP techniques remains out of reach due to the small size of the control register. Nonetheless QSP-like transformations of a Hamiltonian $H$ can be implemented through applying QSP to a controlled time evolution operator.

A controlled time evolution operator is a block-encoding of $e^{i H t}$. Applying QSP to this block-encoding generates trigonometric polynomials $\sum_n c_n e^{i n H t}$. If we select $t = \pi/\|H\|$ and let our variable of interest be $\theta=Ht = \pi H/\|H\|$, then we can approximate functions $F : [-\pi, \pi] \to [-1,1]$ using degree-$d$ trigonometric polynomials:
\begin{align}
    F\left(\theta\right) &\approx \sum_n c_n e^{i n \theta}.
\end{align}
Hence, by selecting $B_n(\theta) =e^{in\theta}$ for the basis functions, we see how Corollary~\ref{cor:basisfuncs} applies in this setting as well. Any method for constructing trigonometric polynomials can be used as long as the coefficients $c_n$ decay exponentially. Conveniently, we show in Appendix~\ref{app:SmoothFunctions} that for $C^\infty$ functions $F(\theta)$, their Fourier series have exponentially decaying coefficients. We see that, due to the relationship between Fourier expansions and Chebyshev expansions, stochastic QSP applies equally well to trigonometric polynomials as to regular polynomials.

\subsection{Generalized QSP}

Recently a technique was proposed \cite{motlagh2024generalized} for optimizing QSP implementation, specifically when the block-encoded operator $U$ is unitary and is encoded via the controlled-$U$ operation. In this situation the usual constraints on parity can be lifted when synthesizing complex polynomials, enabling polynomials of indefinite parity to be generated directly through QSP and avoiding rescaling from using LCU circuit. For real polynomials it is possible to remove parity constraints using Theorem~56 of \cite{Gily_n_2019}, but this introduces an undesirable factor of $1/2$ as discussed earlier. Using the methods of \cite{motlagh2024generalized} this factor can be avoided.

By this reasoning, stochastic QSP is also compatible with the construction of Ref.~\cite{Berry_2024_Doubling}, which halves the asymptotic cost of QSP-based Hamiltonian simulation by using generalized QSP. This is achieved by designing a block encoding of both the quantum walk operator and its inverse, to which generalized QSP may be applied to approximate $e^{-iHt}$ at roughly half the cost of ordinary QSP. Stochastic QSP could be applied on top of this algorithm to further reduce the cost by another factor of $1/2$, equating to a total cost reduction of $1/4$.

\subsection{Approximating the Entire QSP Unitary}

QSP methods typically project out a single polynomial from the QSP sequence, conventionally taken to be the polynomial $P(A)$ encoded in the $|0\rangle\langle 0|$ block (see Eqs.~\eqref{eq:QSP_seqeunce} and~\eqref{eq:QET_QSVT_seq}). Indeed stochastic QSP is concerned with approximating evolution under a channel by projecting out this component. However, in some situations \cite{Lin_2020, Rall_2021} and in amplitude amplification, we are also interested in the other elements of the QSP sequence---particularly those that influence the measurements of the ancilla qubit(s) used to construct the block-encoding. 

These more complex uses of QSP rely on the entire QSP unitary $U_{\vec{\phi}}[A]$ that block-encodes the transformed operator $P(A)$. For simplicity, let us denote this unitary that block encodes $P(A)$ by $U[P(A)]$; similarly, let $U[F(A)]$ denote the unitary that block encodes the target function $F(A)$. To show that stochastic QSP can be applied in these situations as well, we demonstrate that stochastic QSP also approximates the operator $U[P(A)]$ as a whole, rather than just the $P(A)$ sub-block.

\begin{corollary} \label{cor:qsp_unitary} 
In the setting of Theorem~\ref{thm:StochasticQSP}, consider the ensemble of quantum circuits that implement stochastic QSP (i.e. the QSP circuits that implement $\{P_j(A)\}$), upon leaving the QSP ancilla qubit(s) unmeasured. Denoting this ensemble by $\{U_j\}$, the quantum channel $\rho \to \sum_j p_j U_j \rho U_j^\dag $ approximates the map $\rho \to U[F(A)]\rho U[F(A)]^\dag$ to error $O(\epsilon)$ in diamond norm.
\end{corollary}

In the sake of brevity, we defer this proof to Appendix~\ref{app:proof_qsp_unitary}. We also note that this result naturally extends to QSVT, where the desired mapping instead acts on the singular values and singular vectors.

\section{Applications}\label{Sec:Applications}

\begin{figure*}
    \includegraphics[width=.99\linewidth]{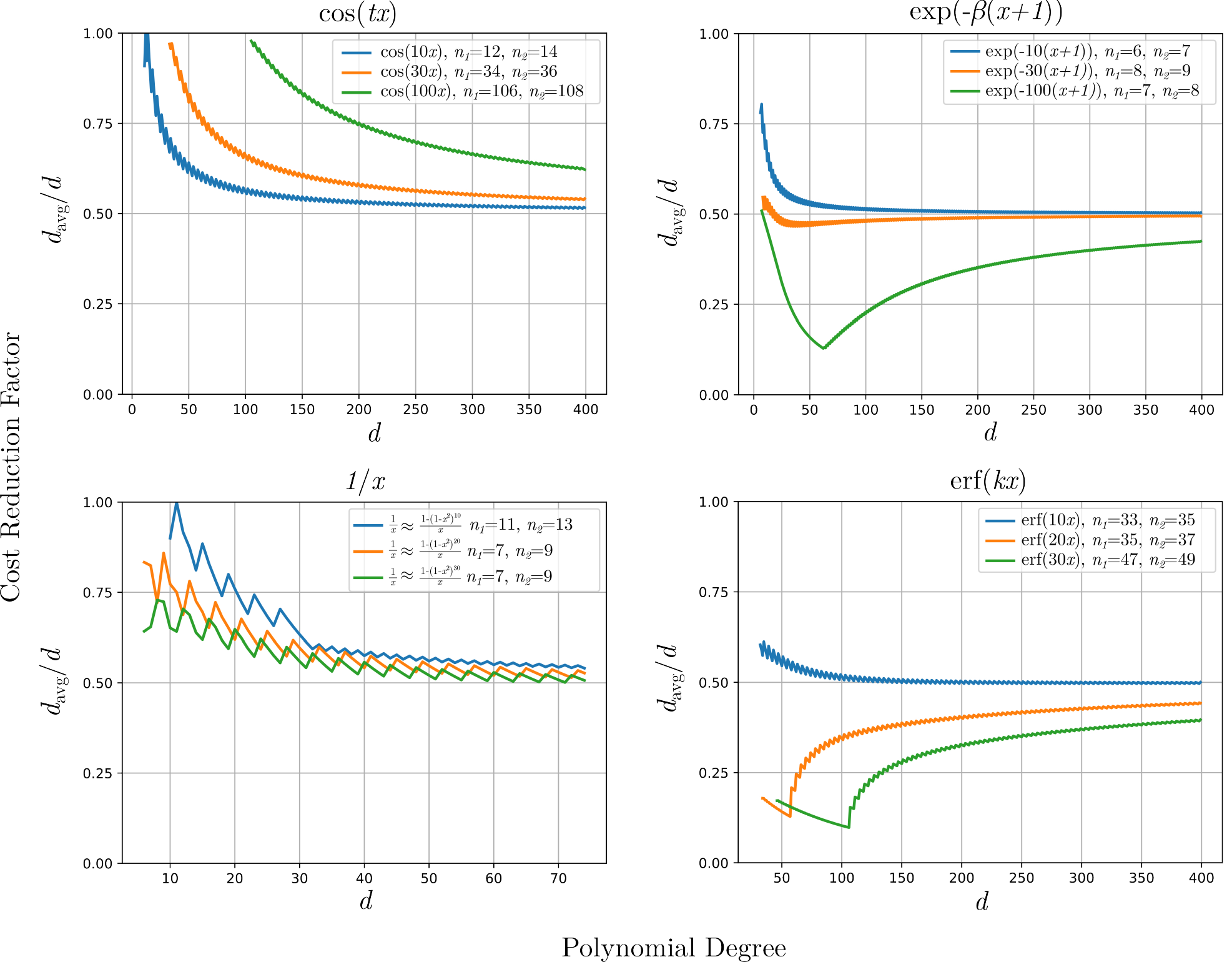}
    \caption{\label{fig:plots} Average reduction in polynomial degree and hence circuit depth for some common polynomials in the literature. $d_\text{avg}$ is computed directly from the probabilities $p_i$ in Eq.~\eqref{eqn:probabilities}, \emph{not} from the bound in Eq.~\eqref{eqn:davgratio}. The cutoff degree $d^*$ is selected according to Eq.~\eqref{eqn:dstarselect}, which depends on a choice of $C$ and $q$ obtained from $n_1,n_2$. See the explanation in the main text. }
\end{figure*}

Here we apply stochastic QSP to several common polynomials used in the quantum algorithms literature to assess its performance in practice. Our focus will be on polynomial approximations of smooth functions, such that Theorem~\ref{thm:StochasticQSP} is applicable. 

As we mentioned in Sec.~\ref{sec:PolyApproxSmooth}, this includes polynomials that approximate functions with a finite number of singularities. This is achieved by excluding regions around the singularities, allowing the function to be replaced by a smooth interpolation that can then be approximated by a polynomial. As we show, this renders stochastic QSP applicable to commonly-used discontinuous functions, like the step function $\Theta(x)$ and the inverse function $1/x$. 

We study the following four polynomials:

\begin{enumerate}
    \item The Jacobi-Anger expansion of cosine \cite[Lemma 57]{Gily_n_2019}:
    \begin{align}
    \cos(tx) = J_{0}(t) + 2\sum_{n=1}^\infty (-1)^n J_{2n}(t) T_{2n}(x),
    \end{align}
    where the $J_{2n}(t)$ are the Bessel functions of the first kind. To achieve additive error at most $\epsilon$, this series may be truncated at degree $d = O\big(|t| + \frac{\log(1/\epsilon)}{ \log(e + \log(1/\epsilon)/|t|)} \big)$. This polynomial, in conjunction with the analogous expansion for $\sin(tx)$, furnishes an algorithm for Hamiltonian simulation with near-optimal query complexity~\cite{Low_2017}. 
    
    \item The Jacobi-Anger expansion of an exponential decay \cite[Lemma~14]{low2017hamiltonian}:
    \begin{align}
    e^{-\beta(x+1)} = e^{-\beta }\left[ I_0(\beta) + 2\sum_{n=1}^\infty  I_n(\beta) T_b(-x)\right],
    \end{align}
    where the $I_n(\beta)$ is the modified Bessel functions of the first kind. This expansion may be truncated at degree $d = O\big(\sqrt{(\beta + \log(1/\epsilon))\log(1/\epsilon)}\big) = O(\sqrt{\beta} \log(1/\epsilon)) $ to achieve error at most $\epsilon$. Naturally, the resulting polynomial is commonly used for imaginary time evolution~\cite{Silva_2023, chan2023simulating}, thermal state preparation~\cite{Gily_n_2019}, and partition function estimation~\cite{tosta2023randomized}. 
    
    \item A smooth approximation of $1/x$ in a domain away from the origin \cite[Lemma~14]{Childs_2017}:
    \begin{equation}
    \begin{aligned}
    \frac{1}{x} &\approx \frac{1- (1-x^2)^b}{x} \\
    &= 4\sum_{n=0}^{b-1} (-1)^n 2^{-2b} \sum_{m=n+1}^b \binom{2b}{b+m} T_{2n+1}(x)
    \end{aligned}
    \end{equation}
    for an even integer $b \gg 1$. While this series is a degree $O(b)$ polynomial, its coefficients decay rapidly, such that it can be further truncated to degree $d= O\big( \sqrt{b \log(b/\epsilon)} \big)$ while guaranteeing error at most $\epsilon$. The resulting polynomial is particularly useful for inverting matrices and thus solving linear systems of equations. If we take the (non-zero) eigenvalues of the matrix to be lower-bounded as $|\lambda| \geq \lambda_{\text{min}}$, then in order to ensure that the polynomial approximation behaves as $\approx 1/x$ over the range of eigenvalues, it suffices to choose chooses $b = O \big( \frac{1}{\lambda_{\text{min}}^2} \log(1/\lambda_{\text{min}} \epsilon) \big)$. This corresponds to a truncation degree $d = O(\kappa \log(\kappa/\epsilon))$, where $\kappa := 1/\lambda_{\text{min}}$ is the condition number. For completeness, we note that algorithms with improved performance have very recently been discovered \cite{dalzell2024shortcut}.

    \item An approximation of $\erf(kx)$ obtained from integrating the Jacobi-Anger expansion of a Gaussian \cite[Corollary~4]{low2017hamiltonian}:
    \begin{equation}
    \begin{aligned}
    &\erf(kx) = \frac{2ke^{-\frac{k^2}{2}}}{\sqrt{\pi}} \Bigg[I_0\left(\frac{k^2}{2}\right)x \\
    &\qquad+ \sum_{n=1}^\infty I_0\left(\frac{k^2}{2}\right) (-1)^n \left(  \frac{T_{2n+1}(x)}{2n+1} - \frac{T_{2n-1}}{2n-1}\right) \Bigg].
    \end{aligned}
    \end{equation}
    To achieve error $\epsilon$, it suffices to truncate this series at degree $d = O(\sqrt{(k^2 + \log(1/\epsilon))\log(1/\epsilon)} \big) = O(k \log(1/\epsilon))$. In practice, this polynomial is used to approximate the step function $\Theta(x)$ by selecting $k \gg 1$. Notable applications of this approximation include the search problem~~\cite{martyn2021grand}, phase estimation~\cite{martyn2021grand, Rall_2021} and ground state preparation~\cite{Lin_2020, Dong_2022}.
\end{enumerate}

All four of these functions feature a \emph{cost parameter}, namely $t,\beta,b,$ and $k$ respectively, whose value determines the truncation degree necessary to achieve an accurate approximation.

We apply stochastic QSP to these polynomials, and illustrate the cost reduction ratio $d_\text{avg}/d$ as a function of $d$ in Figure~\ref{fig:plots}. We rely on the following procedure to compute $d_\text{avg}$. First, we select an integer $n_1$, and then numerically determine values of $C,q$ such that $c_n \leq Ce^{-qn} $ holds for $n \geq n_1$. This is achieved by selecting another integer $n_2 > n_1$ and computing $C,q$ such that $Ce^{-qn}$ goes through the points $(n_1, c_{n_1})$ and $(n_2, c_{n_2})$. In doing so, we choose $n_1,n_2$ to guarantee $c_n \leq Ce^{-qn} $ indeed holds for all $n \geq n_1$, and also to minimize $\log(C)/q$ so as to heuristically reduce the dominant term in the bound on $d_\text{avg}$ in Eq.~\eqref{eqn:davgratio}. We also select $n_1,n_2$ independent of the degree $d$; we find that $n_1$ naturally converges to the degree at which $|c_n|$ starts to decay exponentially. For $\cos(tx)$ and $\erf(kx)$ we find that this regime sets in later as the respective cost parameter increases.
Lastly, for each $d$, the cutoff degree $d^*$ is computed from Eq.~\eqref{eqn:dstarselect}. Then $d_\text{avg}$ is obtained by explicitly computing the probabilities $p_j$ from Eq.~\eqref{eqn:probabilities} and calculating the corresponding weighted average of degrees. 

In Figure~\ref{fig:plots} we observe the desired phenomenon: the cost reduction ratio $d_\text{avg}/d$ approaches $1/2$ as $d$ increases, with a discrepancy scaling as $O(1/d)$. Additionally, as the cost parameter increases, the magnitude of the error terms $\sim \log(C)/qd$ can also increase, resulting in a later approach to $1/2$. This is expected because a larger cost parameter requires a larger degree to maintain the same level of approximation. These results indicate that indeed, stochastic QSP reduces the query complexity of QSP by approximately 1/2 in regimes of practical interest. 

A more surprising phenomenon is that for some functions and values of the cost parameter, $d_\text{avg}/d$ approaches $1/2$ \emph{from below}, resulting in improved performance for small $d$. This arises because we determine $d_\text{avg}$ by choosing values $C,q$ so as to minimize the deviation $\log(C)/q$. In some cases, this can result in a value $C<1$, or equivalently $\log(C)<0$, which causes the ratio $d_\text{avg}/d$ to deviate from $1/2$ by a negative value as per Eq.~\eqref{eqn:davgratio}. Moreover, the sawtooth behavior observed for each function in Figure~\ref{fig:plots} is explained by the presence of the ceiling function in the definition of $d^*$ in Eq.~\eqref{eqn:dstarselect}. When $d$ is small then the effect of this rounding is more pronounced.

Lastly, we emphasize that the cost reduction realized by stochastic QSP hinges on the $\log(1/\epsilon)$ term in the degree of the polynomial approximation. In scenarios where the degree scales multiplicatively with $\log(1/\epsilon)$ (e.g., approximations to $e^{-\beta x}$, $\text{erf}(kx)$, and $1/x$), the cost reduction quickly approaches 1/2. Conversely, in cases where the degree scales additively in $\log(1/\epsilon)$ (e.g., approximations to $\cos(tx)$ and $\sin(tx)$), the cost reduction only approaches 1/2 when the $\log(1/\epsilon)$ term dominates. For instance, in Hamiltonian simulation, the $\log(1/\epsilon)$ term may not dominate in the large $t$ limit, and consequently the practical cost reduction will instead lie between 1 and 1/2.

\section{Conclusion}
By merging quantum signal processing and randomized compiling, we have developed stochastic QSP. As we showed, stochastic QSP enables us to lift a QSP algorithm to its randomized counterpart, and simultaneously shrink the circuit complexity by a factor of 2. We empirically verified this cost reduction for various algorithms of interest, including real/imaginary time evolution, matrix inversion, phase estimation, and ground state preparation. This reduction can also be interpreted as enabling a cost parameter in the underlying function to increase (e.g. a longer time $t$ in Hamiltonian simulation) without changing the query complexity. In aggregate, this result demonstrates that classical randomness is a useful resource for quantum computing, and can help bring quantum algorithms closer to the near-term.

Moreover, in this work we did not consider noisy gates, but rather assumed the ability to perform QSP exactly. As such, we leveraged randomized compiling to suppress error in QSP polynomial approximations to functions. Nonetheless, as randomized compiling can also suppress noise in erroneous gates~\cite{Wallman_2016}, this suggests that a practical implementation of stochastic QSP could benefit from also randomizing over the gate set, as a sort of doubly-stochastic channel. Along these lines, it would be interesting to study the requirements and conditions for implementing stochastic QSP on near-term quantum hardware. 

The performance improvement realized through randomized compiling suggests further uses of this technique in quantum information. While here we have applied randomized compiling to quantum algorithms via QSP, it is likely that randomized compiling can confer a similar advantage to the traditional constructions of quantum algorithms (e.g., Grover search~\cite{grover1996fast}, or conventional phase estimation via the quantum Fourier transform~\cite{kitaev1995quantum}). Further, it would be interesting to search for problems for which randomized compiling (or a variant thereof) confers a super-quadratic suppression of error, translating to a cost reduction by a factor smaller than 1/2 in our context. With an eye toward applications, the mixing lemma could also be used to better understand and generate random unitaries and unitary designs~\cite{Gross_2007, Fisher_2023, Harrow_2009_2, Brand_o_2016}, and perhaps even be integrated with randomized measurements~\cite{Elben_2022} to improve near-term protocols for studying quantum systems. Likewise, there is an absence of randomized compiling in classical simulation algorithms, which could admit similar improvements for problems aimed to emulate quantum mechanics. Given the ubiquity of random processes in the physical world, it is only natural that we can harness randomness to gain deeper insights into quantum systems.

\textit{Acknowledgements}: The authors thank Pawel Wocjan, Anirban Chowdhury, Isaac Chuang, and Zane Rossi for useful discussion and feedback. JMM acknowledges the gracious support of IBM Quantum through the IBM internship program.

\textit{Note}: After posting this work, we were made aware of a recent paper~\cite{wang2024faster} that also mixes over polynomials to suppress error. Despite this similarity, our work makes novel contributions: we provide (1) an analytic construction of the ensemble of polynomials and associated probability distribution, and (2) a rigorous argument for the cost reduction by a factor of $1/2$. These key contributions establish stochastic QSP as a versatile framework with strong performance guarantees.

\bibliography{References}

\appendix
\section{Polynomial Approximation to Smooth Functions}\label{app:SmoothFunctions}
Here we recall theorems on polynomial approximations to smooth functions, specifically the decay of Fourier coefficients and Chebyshev coefficients of smooth functions. For more details on these results, see Refs.~\cite{katznelson2004introduction, boyd2001chebyshev}.

\subsection{Decay of Fourier Coefficients}
Fourier analysis can be used to determine the rate of decay of the Fourier coefficients of smooth functions. The following is a classic result~\cite{katznelson2004introduction}, for which we provide a brief proof:
\begin{theorem}\label{thm:Fourier_Coef_Decay}
    Let $G(\theta)$ be a function with period $2\pi$ and Fourier series 
    \begin{equation}
        G(\theta) = \frac{a_0}{2} + \sum_{n=-\infty}^\infty a_n e^{in\theta}.
    \end{equation}
    If $G(\theta)$ is a $C^k$ function on $\theta \in [0,2\pi)$ (i.e., continuous and differentiable through order $k$), then the Fourier coefficients decay polynomially as 
    \begin{equation}
        a_n = o\Big(\frac{1}{n^k}\Big).
    \end{equation}
    Similarly, if $G(\theta)$ is a $C^\infty$ function on $\theta \in [0,2\pi)$, then the Fourier coefficients decay super-polynomially as 
    \begin{equation}
        a_n = O\big(e^{-qn^r} \big)
    \end{equation}
    for some $q,r > 0$. 
\end{theorem}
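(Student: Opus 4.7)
The plan is to derive both bounds by integration by parts on the integral definition of $a_n$, exploiting the periodicity of $G$ to kill all boundary contributions. Starting from
\begin{equation}
a_n = \frac{1}{2\pi}\int_0^{2\pi} G(\theta)\, e^{-in\theta}\, d\theta,
\end{equation}
I would integrate by parts $k$ times. At every step the boundary term is of the form $[G^{(j)}(\theta) e^{-in\theta}/(-in)^{j+1}]_0^{2\pi}$, which vanishes because $G, G', \dots, G^{(k-1)}$ are continuous and $2\pi$-periodic. After $k$ iterations,
\begin{equation}
a_n \;=\; \frac{1}{(in)^k}\cdot\frac{1}{2\pi}\int_0^{2\pi} G^{(k)}(\theta)\, e^{-in\theta}\, d\theta.
\end{equation}
Since $G^{(k)}$ is continuous on a compact interval and in particular integrable, the Riemann--Lebesgue lemma forces the remaining integral to vanish as $|n|\to\infty$. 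This gives $a_n = o(1/n^k)$, proving the first claim.

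For the $C^\infty$ case, the simplest route is to observe that the previous argument applies for every positive integer $k$, so $|a_n| = o(n^{-k})$ for all $k$ simultaneously. To upgrade this faster-than-polynomial decay to the quantitative form $O(e^{-qn^r})$ appearing in the statement, I would invoke additional regularity: if $G$ extends holomorphically to a complex strip (or, after passing to the variable $z = e^{i\theta}$, to an annulus $\{e^{-\delta} < |z| < e^{\delta}\}$ for some $\delta > 0$), then deforming the contour in the Fourier integral onto $|z| = e^{-\delta\,\mathrm{sgn}(n)}$ produces the uniform estimate $|a_n| \leq \|G\|_\infty\, e^{-\delta|n|}$, which is the desired bound with $r=1$ and $q=\delta$. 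For the weaker Gevrey scale $G^s$, the same complex-analytic technique yields $r = 1/s$.

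The main obstacle is exactly this upgrade step. Pure $C^\infty$ regularity on the circle is not strong enough to guarantee a bound of the form $|a_n| \leq Ce^{-qn^r}$ for any specific $r>0$; one can construct $C^\infty$ functions whose Fourier coefficients decay faster than every polynomial yet slower than every stretched exponential. Consequently, the second half of the theorem must be read as requiring, implicitly, enough extra smoothness (analyticity on a neighborhood of the real axis, or Gevrey regularity) to make the contour-shift argument go through. All of the functions targeted by the main text ($\cos(tx)$, $e^{-\beta(x+1)}$, $\mathrm{erf}(kx)$, the smoothed $1/x$) are entire or admit analytic extensions to a strip, so this stronger hypothesis is satisfied and the $r=1$ (geometric) decay regime applies. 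I would therefore present the proof in two clean pieces: the integration-by-parts step for the $C^k$ bound, and a brief contour-deformation step handling the $C^\infty$ case under the implicit analyticity assumption that is actually used in the applications.
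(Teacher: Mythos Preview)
Your treatment of the $C^k$ case matches the paper's proof exactly: integration by parts $k$ times (boundary terms vanish by periodicity), followed by Riemann--Lebesgue applied to the resulting integral involving $G^{(k)}$.

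For the $C^\infty$ case, your route genuinely diverges from the paper's. The paper simply repeats the $C^k$ argument for every $k$ to conclude $n^k a_n \to 0$ for all $k$, and then \emph{asserts} that this super-polynomial decay is the same as $a_n = O(e^{-qn^r})$ for some $q,r>0$, without further justification. You correctly flag this step as the weak point: faster-than-every-polynomial decay does not by itself force a stretched-exponential envelope, and you instead propose obtaining the quantitative bound via analytic continuation and contour shifting (or Gevrey regularity), which does yield an explicit $r$. Your approach is more honest about the hypotheses actually needed, and your remark that all the target functions in the applications are analytic (so $r=1$ applies) is exactly the right way to reconcile the stated theorem with its intended use. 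The paper's version is shorter but relies on an identification that, as you note, is not literally true for arbitrary $C^\infty$ functions.
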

\begin{proof}
    The Fourier coefficients are
    \begin{equation}\label{eq:Fourier_Coef}
        a_n = \frac{1}{2\pi} \int_0^{2\pi} G(\theta) e^{-i n\theta} d\theta . 
    \end{equation}
    Using integration by parts repeatedly, we can write this as 
    \begin{equation}
        \begin{aligned}
            a_n &= \frac{1}{2\pi} \int_0^{2\pi} G(\theta) e^{-i n\theta} d\theta \\
            &= \frac{i}{n} \frac{1}{2\pi} \int_0^{2\pi} G'(\theta) e^{-in\theta} d\theta \\
            & = \ \ ... \\ 
            &= \frac{i^k}{n^k} \frac{1}{2\pi} \int_0^{2\pi} G^{(k)}(\theta) e^{-in\theta} d\theta . 
        \end{aligned}
    \end{equation}
    This implies that the Fourier coefficients of the $k$th derivative $G^{(k)}(\theta)$ are $(-in)^k a_n$. 

    According to the Riemann-Lebesgue lemma~\cite{katznelson2004introduction}, the Fourier coefficients of a smooth function go to 0 as $n\rightarrow \infty$. Therefore, if $G(\theta)$ is a $C^k$ function, then $G^{(k)}(\theta)$ is continuous, and thus its Fourier coefficients decay as $\lim_{n\rightarrow \infty} (-in)^k a_n = 0$. This limit implies that $a_n = o(1/n^k)$. 
    
    On the other hand, if $G(\theta)$ is a $C^\infty$ function, then by an analogous argument, $\lim_{n\rightarrow \infty} n^k a_n = 0 $ for all integers $k \geq 1$. This implies that $a_n$ decays super-polynomially, i.e. as $a_n = O\big(e^{-qn^r} \big)$ for some $q,r>0$. 
\end{proof}

\subsection{Decay of Chebyshev Coefficients}
An analogous result can be derived for the decay of Chebyshev coefficients. Recall that the $n$th Chebyshev polynomial $T_n(x)$ is a degree $n$ polynomial defined on $x\in [-1,1]$ as
\begin{equation}
    T_n(x) = \cos(n \arccos(x)).
\end{equation}
It can be shown that $T_n(x)$ is a degree $n$ polynomial of definite parity (either even or odd, depending on $n$) and bounded magnitude $|T_n(x)|_{[-1,1]} = 1$~\cite{boyd2001chebyshev}. The Chebyshev polynomials provide a convenient basis for expanding functions on $x\in [-1,1]$. A function $F(x)$ can be expanded as
\begin{equation}
    F(x) = \frac{c_0}{2} + \sum_{n=1}^\infty c_n T_n(x) , 
\end{equation}
where
\begin{equation}
     c_n = \frac{2}{\pi} \int_{-1}^1 \frac{F(x) T_n(x)}{\sqrt{1-x^2}} dx 
\end{equation}
are the Chebyshev coefficients for all $n\geq 0$. 

By the relation between Chebyshev series and Fourier series, it can be shown that the Chebyshev coefficients decay in a manner analogous to the Fourier coefficients: 
\begin{theorem}\label{thm:Cheby_Coef_Decay}
    Let $F(x)$ be a function on $x\in [-1,1]$ with the Chebyshev series 
    \begin{equation}
        F(x) = \frac{c_0}{2} + \sum_{n=1}^\infty c_n T_n(x) , 
    \end{equation}
    If $F(x)$ is a $C^k$ function on $x \in [-1,1]$, then the Chebyshev coefficients decay polynomially as 
    \begin{equation}
        c_n = o\Big(\frac{1}{n^k}\Big).
    \end{equation}
    Similarly, if $F(x)$ is a $C^\infty$ function on $x \in [-1,1]$, then the Chebyshev coefficients decay super-polynomially as 
    \begin{equation}
        c_n = O\big(e^{-qn^r} \big)
    \end{equation}
    for some $q,r > 0$. 
\end{theorem}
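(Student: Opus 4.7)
The plan is to reduce Theorem~\ref{thm:Cheby_Coef_Decay} to the Fourier-coefficient decay result of Theorem~\ref{thm:Fourier_Coef_Decay} via the classical substitution $x = \cos\theta$. Under this change of variable, $T_n(x) = \cos(n\theta)$, so if we define
\begin{equation}
    G(\theta) := F(\cos\theta),
\end{equation}
then $G$ is an even, $2\pi$-periodic function on $\mathbb{R}$, and the Chebyshev expansion of $F$ becomes the Fourier cosine expansion of $G$:
\begin{equation}
    G(\theta) = \frac{c_0}{2} + \sum_{n=1}^\infty c_n \cos(n\theta).
\end{equation}
In particular, the Chebyshev coefficients $c_n$ of $F$ coincide (up to the standard normalization) with the Fourier coefficients of $G$, a fact that can be verified directly by substituting $x = \cos\theta$ into the integral formula for $c_n$ and using $dx = -\sin\theta\, d\theta$ together with $\sqrt{1-x^2} = \sin\theta$.

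Next, I would observe that smoothness transfers cleanly under this substitution. Since $\cos\theta$ is $C^\infty$ as a function of $\theta$, the chain rule immediately gives that $G$ inherits the regularity of $F$: if $F \in C^k([-1,1])$, then $G \in C^k(\mathbb{R})$, and similarly for $C^\infty$. The periodicity and evenness of $G$ ensure that no boundary effects at $\theta = 0, \pi$ spoil this --- the derivatives of $G$ at these points are automatically well-defined and continuous because $G$ is obtained by composing with a globally smooth periodic function.

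With $G$ identified as a smooth, periodic function of the same differentiability class as $F$, the conclusion follows by directly invoking Theorem~\ref{thm:Fourier_Coef_Decay}: if $F \in C^k$ then the Fourier coefficients of $G$ decay as $o(1/n^k)$, and if $F \in C^\infty$ they decay super-polynomially as $O(e^{-q n^r})$ for some $q, r > 0$. Since the Chebyshev coefficients of $F$ are precisely these Fourier coefficients, the stated decay rates carry over verbatim.

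The only real subtlety, and the step that warrants a careful check, is the identification of Chebyshev coefficients with Fourier coefficients of $G$; this is where the weight $1/\sqrt{1-x^2}$ in the Chebyshev inner product gets absorbed into the Jacobian $d\theta = -dx/\sin\theta$, making the two integrals equivalent. Beyond this book-keeping, the argument is entirely mechanical, and Theorem~\ref{thm:Fourier_Coef_Decay} does all the analytic heavy lifting.
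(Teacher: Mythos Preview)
Your proposal is correct and follows essentially the same route as the paper's proof: both reduce to Theorem~\ref{thm:Fourier_Coef_Decay} via the substitution $x=\cos\theta$, identify the Chebyshev coefficients of $F$ with the Fourier coefficients of $G(\theta)=F(\cos\theta)$, and transfer $C^k$ regularity from $F$ to $G$ by the chain rule. The paper spells out the smoothness transfer a bit more explicitly (writing $G'(\theta)=-F'(\cos\theta)\sin\theta$ and noting the structure of higher derivatives), but the argument is the same.
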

\begin{proof}
    Let $\theta = \arccos(x)$, or equivalently $x = \cos\theta$. Using this change of variables, we can re-express the Chebyshev coefficients as
    \begin{equation}
    \begin{aligned}
     c_n &= \frac{2}{\pi} \int_{-1}^1 \frac{F(x) T_n(x)}{\sqrt{1-x^2}} dx = \frac{2}{\pi} \int_0^{\pi} F(\cos(\theta)) \cos(n \theta) d\theta \\
     &= \frac{1}{2\pi} \int_0^{2\pi} F(\cos(\theta)) e^{-i n \theta} d\theta 
    \end{aligned}
    \end{equation}
    where the last equality stems from $F(\cos(\theta))$ being an even function of $\theta$. Comparing with Eq.~\eqref{eq:Fourier_Coef}, we see that the Chebyshev coefficients $c_n$ are equal to the Fourier coefficients of the function $G(\theta) := F(\cos(\theta)) $. Therefore, if we can show that $F(x)$ being a $C^k$ function of $x$ implies that $G(\theta) = F(\cos(\theta))$ is a $C^k$ function of $\theta$, then we can inherit the result of Theorem~\ref{thm:Fourier_Coef_Decay} to prove the purported decay pattern of the coefficients $c_n$. 

    This is easy to show. Observe that the first derivative of $G(\theta)$ is
    \begin{equation}
        G'(\theta) = F'(\cos\theta) (-\sin \theta) = - F'(x) \sin \theta .
    \end{equation}
    If $F(x)$ is a $C^1$ function of $x=\cos\theta$, such that $F'(x)$ is a continuous function of $x$, then $G'(\theta) = -F'(\cos\theta) \sin (\theta)$ is a composition of continuous functions of $\theta$. Therefore, $G'(\theta)$ is is also a continuous function of $\theta$, which implies that $G(\theta)$ is a $C^1$ function of $\theta$. A similar trend persists at higher orders: by using the chain rule and product rule, the $k$th derivative $G^{(k)}(\theta)$ can be expressed as a linear combination of products of derivatives $F^{(j)}(\cos \theta)$ for $j\leq k$ and trigonometric polynomials $\sin^a(\theta) \cos^b(\theta)$ for some integers $a,b\geq 0$. Thus, if $F(x)$ is a $C^k$ function of $x$, then this expression for $g^{(k)}(\theta)$ is also continuous, which implies that $G(\theta)$ is a $C^k$ function. 

    Therefore, by appealing to Theorem~\ref{thm:Fourier_Coef_Decay}, we see that if $F(x)$ is a $C^k$ function, then the Chebyshev coefficients $c_n$ decays polynomially as $c_n = o(1/n^k)$. Similarly, if $F(x)$ is a $C^\infty$ function, then $c_n$ decays super-polynomially as $c_n = O\big(e^{-qn^r}\big)$ for some $q,r > 0$.  
\end{proof}

\section{Extensions of the Mixing Lemma}\label{app:Mixing_Lemmas}
Here we present generalizations of the mixing lemma to non-unitary evolution and block-encodings. The proofs of these variants parallel the proof of the mixing lemma presented in Ref.~\cite{Campbell_2017}. 

\subsection{The Generalized Mixing Lemma}

Consider a mixing lemma for arbitrary operators, including non-unitary evolution. In this scenario, we we wish to implement a channel $\mathcal{S}(\rho) = S \rho S^\dag$, where $S$ is not necessarily unitary, yet we only have access to operators $R_j$ that approximate $S$. Then we can prove the following:
\begin{lemma}[Generalized Mixing Lemma]\label{thm:NonUnitaryMixing_app}
    Let $S$ be a target operator, possibly non-unitary, and $\mathcal{S}(\rho) = S \rho S^\dag $ the corresponding channel. Suppose there exist operators $R_j$ and a probability distribution $p_j$ that approximate $S$ as
    \begin{equation}
    \begin{aligned}
        & \| R_j - S \| \leq a \text{ for all } j,\\
        & \Big\| \sum_j p_j R_j - S \Big\| \leq b ,
    \end{aligned}
    \end{equation}
    for some $a,b > 0$. Then, the corresponding channel $\Lambda(\rho) = \sum_j p_j R_j \rho R_j^\dag$ approximates the channel $\mathcal{S}$ as
    \begin{equation}
        \| \Lambda - \mathcal{S} \|_\diamond \leq a^2 + 2b \|S \|. 
    \end{equation}
\end{lemma}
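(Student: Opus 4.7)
The plan is to mimic the standard Hastings--Campbell argument, but track carefully where the unitarity of $S$ was originally used and replace the factor of $1$ with $\|S\|$. The strategy is to introduce the operator-level errors $\Delta_j := R_j - S$, expand the channels $\mathcal{R}_j(\rho)=R_j\rho R_j^\dagger$ around $\mathcal{S}$, and separately bound the linear cross-term and the quadratic self-term in trace norm (in fact in diamond norm, by tensoring with identity).

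First I would write $R_j = S + \Delta_j$ and compute
\begin{equation}
\sum_j p_j R_j \rho R_j^\dagger = S\rho S^\dagger + S\rho \bar{\Delta}^\dagger + \bar{\Delta}\rho S^\dagger + \sum_j p_j \Delta_j \rho \Delta_j^\dagger,
\end{equation}
where $\bar{\Delta} := \sum_j p_j \Delta_j = \sum_j p_j R_j - S$. The hypotheses give $\|\Delta_j\|\leq a$ and $\|\bar{\Delta}\|\leq b$. Subtracting $\mathcal{S}(\rho)=S\rho S^\dagger$ yields a clean expression for $(\Lambda - \mathcal{S})(\rho)$ consisting of the two linear cross terms and one quadratic term.

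Next I would upgrade this to the diamond norm by simply tensoring everything with the identity on an auxiliary register, noting that $\|S\otimes I\| = \|S\|$ and $\|\bar{\Delta}\otimes I\| = \|\bar{\Delta}\|$, so the same algebraic identity holds for $(\Lambda\otimes\mathcal{I} - \mathcal{S}\otimes\mathcal{I})(\rho)$ on any bipartite state $\rho$. Then, using the standard Hölder-type inequality $\|ABC\|_1 \leq \|A\|\cdot\|B\|_1\cdot\|C\|$ together with $\|\rho\|_1=1$, I would bound each piece:
\begin{equation}
\bigl\|(S\otimes I)\rho(\bar{\Delta}^\dagger\otimes I)\bigr\|_1 \leq \|S\|\,b,\qquad \bigl\|(\bar{\Delta}\otimes I)\rho(S^\dagger\otimes I)\bigr\|_1 \leq b\,\|S\|,
\end{equation}
and
\begin{equation}
\sum_j p_j \bigl\|(\Delta_j\otimes I)\rho(\Delta_j^\dagger\otimes I)\bigr\|_1 \leq \sum_j p_j\,a^2 = a^2.
\end{equation}
Applying the triangle inequality and taking the supremum over $\rho$ would then produce $\|\Lambda - \mathcal{S}\|_\diamond \leq a^2 + 2b\|S\|$.

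The one step that requires mild care, rather than being outright difficult, is keeping the factor $\|S\|$ from silently collapsing to $1$; in the original unitary version, $\|S\|=1$ makes the cross-term contribution $2b$ rather than $2b\|S\|$. So the only ``obstacle'' is remembering that the bound $\|ABC\|_1 \leq \|A\|\|B\|_1\|C\|$ forces a copy of $\|S\|$ to appear on each side of $\rho$ in the linear cross terms, and that $\|\Delta_j\|\leq a$ is what controls the quadratic term without any dependence on $\|S\|$. No additional input from the structure of the $R_j$ or from complete positivity is needed, because the bound proceeds purely at the level of operator norms and trace norms of the explicit decomposition.
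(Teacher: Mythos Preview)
Your proposal is correct and follows essentially the same argument as the paper: define $\Delta_j = R_j - S$, expand $\Lambda(\rho)$ into the linear cross terms and the quadratic term, bound each using H\"older's inequality $\|ABC\|_1 \le \|A\|\,\|B\|_1\,\|C\|$, and then lift to the diamond norm by tensoring with the identity. The only cosmetic difference is ordering (the paper first bounds in trace norm and then observes the diamond-norm extension, whereas you tensor first), which is immaterial.
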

\begin{proof}
Paralleling the proof of the mixing lemma in Ref.~\cite{Campbell_2017}, let $\delta_j = R_j - S$. This obeys $\| \delta_j \| \leq a$ and $\big\| \sum_j p_j \delta_j \big\| \leq b$ by the assumed conditions of the theorem. The action of the channel $\mathcal{S}$ can then be expanded as
\begin{equation}
\begin{aligned}
    \Lambda(\rho) &= \sum_j p_j R_j \rho R_j^\dag \\
    &= \sum_j p_j (S+\delta_j) \rho (S+\delta_j)^\dag \\
    &= S\rho S^\dag + \Big(\sum_j p_j \delta_j \Big)\rho S^\dag \\
    & \quad + S \rho \Big(\sum_j p_j \delta_j^\dag \Big) + \sum_j \delta_j \rho \delta_j^\dag.
\end{aligned}
\end{equation}
We can then bound the 1-norm $\|\Lambda(\rho) - \mathcal{S}(\rho) \|_1$ via the triangle inequality as 
\begin{equation}\label{eq:random_channel_1norm}
\begin{aligned}
    \| \Lambda(\rho) - \mathcal{S}(\rho) \|_1 \leq & \ \Big\| \Big(\sum_j p_j \delta_j \Big)\rho S^\dag \Big\|_1 \\
    & + \Big\| S \rho \Big(\sum_j p_j \delta_j^\dag \Big) \Big\|_1 + \Big\| \sum_j \delta_j \rho \delta_j^\dag \Big\|_1 .
\end{aligned}
\end{equation}

The first term on the right hand side of Eq.~\eqref{eq:random_channel_1norm} can be bounded by appealing to Holder's inequality (specifically, $\| AB \|_1 \leq \|A\| \|B\|_1$ and $\| AB \|_1 \leq \|A\|_1 \|B\|$ in this context):
\begin{equation}
\begin{aligned}
    \Big\| \Big( \sum_j p_j \delta_j \Big) \rho S^\dag \Big\|_1 &\leq 
    \Big\| \sum_j p_j \delta_j \rho \Big\|_1 \| S^\dag \| \\
    &\leq \Big\| \sum_j p_j \delta_j \Big\| \| \rho \|_1 \| S \| \\
    &\leq \| S \| b, 
\end{aligned}
\end{equation}
where we have used that $\| \rho \|_1 = 1 $. Analogously, the second term in Eq.~\eqref{eq:random_channel_1norm} can be upper bounded by $\| S \| b$. Again invoking Holder's inequality, the last term in Eq.~\eqref{eq:random_channel_1norm} can be bounded as 
\begin{equation}
\begin{aligned}
    \Big\| \sum_j \delta_j \rho \delta_j^\dag \Big\|_1 &\leq \sum_j p_j \|\delta_j \rho \delta_j^\dag\|_1 \\
    &\leq \sum_j p_j \| \delta_j \rho \|_1 \| \delta_j^\dag \| \\
    &\leq \sum_j p_j \| \delta_j \| \|\rho\|_1 \| \delta_j^\dag \| \\
    & \leq \sum_j p_j \| \delta_j \| \| \delta_j \| \\
    & \leq a^2. 
\end{aligned}
\end{equation}
Therefore we have
\begin{equation}\label{eq:NonUnitary_Mixing_1norm}
    \big\| \Lambda(\rho) - \mathcal{S}(\rho) \big\|_1 \leq a^2 + 2b \|S \|, 
\end{equation}
for all density matrices $\rho$. This translates to a diamond norm bound:
\begin{equation}
    \begin{aligned}
        \big\| \Lambda - \mathcal{S} \big\|_\diamond &= \sup_{\| \sigma\|_1 \leq 1} \| (\Lambda \otimes \mathcal{I} )(\sigma) - (\mathcal{S} \otimes \mathcal{I})(\sigma) \|_1 \\
        & \leq a^2 + 2b \|S \| ,
    \end{aligned}
\end{equation}
because appending the identity channel $\mathcal{I}$ does not alter the proof of 1-norm bound in Eq.~\eqref{eq:NonUnitary_Mixing_1norm}, which holds true for all density matrices.
\end{proof}

Notably, this result is analogous to the original mixing lemma, but modified by the spectral norm of $S$. For unitary evolution where $\| S \| = 1$, the generalized mixing lemma reduces to the usual mixing lemma.

\subsection{The Mixing Lemma for Block-Encodings}
We can consider an analog of the mixing lemma for block-encodings. In this scenario, we wish to evolve under an operator $S$ (which may be non-unitary) encoded in a unitary $V$, yet we only have access to operators $R_j$ block-encoded in unitaries $U_j$. We first study the case in which the operators are block-encoded in the $|0\rangle \langle 0|$ block of their respective unitaries, which is the simplest form of a block-encoding and most often considered in the literature. Afterwards, we proceed to the general case, in which the operators are encoded through arbitrary projectors $\Pi$ and $\Pi'$.

First consider encodings in the $|0\rangle \langle 0| $ block:
\begin{lemma*}[Mixing Lemma for Block-Encodings: $|0\rangle \langle 0|$ Block]
    Let $V$ be a unitary that block-encodes a (possibly non-unitary) target operator $S$ as $S = (\langle 0 | \otimes I) V ( |0 \rangle \otimes I) $. Suppose there exist unitaries $U_j$ that block-encode operators $R_j$ as $R_j = (\langle 0 | \otimes I)  U_j ( | 0 \rangle \otimes I)$. Also suppose there exists a probability distribution $p_j$ such that
    \begin{equation}
    \begin{aligned}
        &\| R_j  - S \| \leq a \text{ for all } j , \\ 
        &\Big\| \sum_j p_j R_j - S \Big\| \leq b ,
    \end{aligned}
    \end{equation}
    for some $a,b \geq 0$. Then, the corresponding unitary channel $\Lambda(\rho) = \sum_j p_j U_j \rho U_j^\dag$ approximates the action of the channel $\mathcal{V}$ as
    \begin{equation}
        \big\| \bar{\Lambda} - \bar{\mathcal{V}} \big\|_\diamond \leq a^2 + 2b , 
    \end{equation}
    where $\bar{\Lambda}$ and $\bar{\mathcal{V}}$ are channels that access the block-encodings of $\Lambda$ and $\mathcal{V}$ by appending an ancilla qubit and post-selecting:
    \begin{equation}
        \begin{aligned}
            & \bar{\Lambda}(\rho) = (\langle 0 | \otimes I) \cdot  \Lambda \big( | 0 \rangle \langle 0 | \otimes \rho \big) \cdot ( |0 \rangle \otimes I) \\
            & \bar{\mathcal{V}}(\rho) = (\langle 0 | \otimes I) \cdot \mathcal{V} \big( | 0 \rangle \langle 0 | \otimes \rho \big) \cdot ( |0 \rangle \otimes I).
        \end{aligned}
    \end{equation}
\end{lemma*}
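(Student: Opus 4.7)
The plan is to reduce the block-encoding version of the mixing lemma to the generalized mixing lemma (Lemma~\ref{thm:NonUnitaryMixing_app}) by computing the action of the post-selected channels $\bar{\Lambda}$ and $\bar{\mathcal{V}}$ explicitly. The key observation is that projecting the ancilla onto $|0\rangle$ before and after the unitary evolution extracts precisely the block-encoded operator, so $\bar{\Lambda}$ and $\bar{\mathcal{V}}$ are already in the form to which the generalized mixing lemma applies.

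First, I would compute $\bar{\mathcal{V}}(\rho)$ directly. Expanding $\mathcal{V}(|0\rangle\langle 0|\otimes \rho) = V(|0\rangle\langle 0|\otimes \rho)V^\dag$ and sandwiching with $(\langle 0|\otimes I)$ and $(|0\rangle \otimes I)$ gives
\begin{equation}
    \bar{\mathcal{V}}(\rho) = \bigl[(\langle 0|\otimes I)V(|0\rangle \otimes I)\bigr]\rho \bigl[(\langle 0|\otimes I)V^\dag(|0\rangle \otimes I)\bigr] = S\rho S^\dag.
\end{equation}
An identical calculation for each summand of $\Lambda(\sigma) = \sum_j p_j U_j \sigma U_j^\dag$ yields $\bar{\Lambda}(\rho) = \sum_j p_j R_j \rho R_j^\dag$. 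Hence the problem reduces to bounding the diamond distance between the channels $\rho \mapsto \sum_j p_j R_j \rho R_j^\dag$ and $\rho \mapsto S\rho S^\dag$, which is exactly the setting of the generalized mixing lemma.

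Next, I would invoke Lemma~\ref{thm:NonUnitaryMixing_app} with the given hypotheses $\|R_j - S\| \le a$ and $\|\sum_j p_j R_j - S\| \le b$. That lemma yields a diamond-norm bound of $a^2 + 2b\|S\|$. The final step is to note that because $S$ is block-encoded in the unitary $V$, we automatically have $\|S\| = \|(\langle 0|\otimes I)V(|0\rangle \otimes I)\| \le \|V\| = 1$, so the bound simplifies to $a^2 + 2b$. One small subtlety worth checking is that the appended identity channel $\mathcal I$ in the definition of the diamond norm commutes with the block-encoding projectors $|0\rangle \otimes I$ and $\langle 0| \otimes I$, so the reduction in the first step goes through uniformly over all extensions, preserving the diamond bound rather than merely a single-copy trace bound.

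The main obstacle, such as it is, is bookkeeping: ensuring that the projectors $|0\rangle \otimes I$ act only on the block-encoding ancilla and therefore commute through the $\otimes \mathcal I$ that defines the diamond norm, so that the reduction to Lemma~\ref{thm:NonUnitaryMixing_app} is valid at the level of diamond distance rather than merely trace distance on a fixed input. Once this commutation is written out carefully, the proof is essentially immediate from the generalized mixing lemma plus the elementary bound $\|S\|\le 1$ for operators block-encoded in unitaries.
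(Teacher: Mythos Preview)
Your proposal is correct and follows essentially the same route as the paper: compute $\bar{\mathcal{V}}(\rho)=S\rho S^\dag$ and $\bar{\Lambda}(\rho)=\sum_j p_j R_j\rho R_j^\dag$ by expanding through the ancilla projectors, then invoke the generalized mixing lemma (Lemma~\ref{thm:NonUnitaryMixing_app}) together with $\|S\|\le 1$ for block-encoded operators. Your extra remark about the ancilla projectors commuting with the $\otimes\,\mathcal{I}$ extension in the diamond norm is a welcome clarification that the paper leaves implicit.
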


\begin{proof}   
    First, observe that by linearity 
    \begin{equation}
        \begin{aligned}
            \bar{\Lambda}(\rho) &=  (\langle 0 | \otimes I ) \Lambda\big( | 0 \rangle \langle 0 | \otimes \rho \big) ( |0 \rangle \otimes I) \\
            &= \sum_j p_j R_j \rho R_j^\dag, 
        \end{aligned}
    \end{equation}
    and 
    \begin{equation}
        \begin{aligned}
            \bar{\mathcal{V}}(\rho) &= (\langle 0 | \otimes I ) \mathcal{V}\big( | 0 \rangle \langle 0 | \otimes \rho \big) ( |0 \rangle \otimes I) \\
            & = S \rho S^\dag. 
        \end{aligned}
    \end{equation}
    Therefore, $\bar{\Lambda}$ and $\bar{\mathcal{V}}$ are analogous to the channels $\Lambda$ and $\mathcal{S}$ considered in Theorem~\ref{thm:NonUnitaryMixing_app}. Likewise, $S$ and $R_j$ obey the requisite conditions of Theorem~\ref{thm:NonUnitaryMixing_app} as per the assumptions of this theorem. Accordingly, Theorem~\ref{thm:NonUnitaryMixing_app} implies that
    \begin{equation}
        \begin{aligned}
            & \| \bar{\Lambda} - \bar{\mathcal{V}} \|_\diamond \leq a^2 + 2b \| S \| \leq a^2 + 2b,
        \end{aligned}
    \end{equation}
    where we have used that $\|S\| \leq 1 $ because $S$ is block-encoded in a unitary.
\end{proof}

Next, consider the general case in which the operators are block-encoded by arbitrary projectors $\Pi$ and $\Pi'$. Specifically, in this case, an operator is block-encoded as $S = \Pi V \Pi' $ for projection operators $\Pi, \Pi'$. In practice, this means that $S$ is accessed by first applying $\Pi'$ to project the quantum state into the block of interest, then applying $V$, and finally applying $\Pi'$ to extract the desired block. With this intuition, we can prove the following:
\begin{lemma}[Mixing Lemma for Block-Encodings: General Encoding]\label{lemma:BlockEncodingMixing_app}
    Let $V$ be a unitary, that block-encodes a (possibly non-unitary) target operator $S$ as $S = \Pi V \Pi'$, where $\Pi, \Pi' $ are orthogonal projectors. Suppose there exist unitaries $U_j$ that block-encode operators $R_j$ as $R_j = \Pi U_j \Pi '$. Also suppose there exists a probability distribution $p_j$ such that
    \begin{equation}
    \begin{aligned}
        &\| R_j  - S \| \leq a \text{ for all } j , \\ 
        &\Big\| \sum_j p_j R_j - S \Big\| \leq b .
    \end{aligned}
    \end{equation}
    Then, the corresponding unitary channel $\Lambda(\rho) = \sum_j p_j U_j \rho U_j^\dag$ approximates the action of the channel $\mathcal{V}$ as
    \begin{equation}
        \big\| \bar{\Lambda} - \bar{\mathcal{V}} \big\|_\diamond \leq a^2 + 2b , 
    \end{equation}
    where $\bar{\Lambda}$ and $\bar{\mathcal{V}}$ are channels that access the block-encodings of $\Lambda$ and $\mathcal{V}$ by applying the projectors $\Pi$ and $\Pi'$:
    \begin{equation}
        \begin{aligned}
            & \bar{\Lambda}(\rho) = \Pi \cdot  \Lambda \big( \Pi' \rho \Pi' \big) \cdot \Pi \\
            & \bar{\mathcal{V}}(\rho) = \Pi \cdot  \mathcal{V} \big( \Pi' \rho \Pi' \big) \cdot \Pi.
        \end{aligned}
    \end{equation}
\end{lemma}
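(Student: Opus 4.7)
The plan is to reduce this statement to the already-established generalized mixing lemma (Lemma~\ref{thm:NonUnitaryMixing_app}) by an algebraic simplification that mirrors the proof given for the $|0\rangle\langle 0|$ block case. The first step is to unfold the definition of the effective channel $\bar{\Lambda}$. Using linearity and $\Lambda(\sigma) = \sum_j p_j U_j \sigma U_j^\dag$, I would write
\begin{equation*}
\bar{\Lambda}(\rho) = \Pi \, \Lambda(\Pi' \rho \Pi') \, \Pi = \sum_j p_j \, \Pi U_j \Pi' \rho \, \Pi' U_j^\dag \Pi = \sum_j p_j \, R_j \rho R_j^\dag ,
\end{equation*}
where in the last step I used $R_j = \Pi U_j \Pi'$ together with $R_j^\dag = \Pi' U_j^\dag \Pi$ (and idempotence $(\Pi')^2 = \Pi'$ to absorb the redundant projector). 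An identical calculation gives $\bar{\mathcal{V}}(\rho) = S \rho S^\dag$. Hence $\bar{\Lambda}$ and $\bar{\mathcal{V}}$ are precisely the channels to which the generalized mixing lemma applies, with the family $\{R_j\}$ and target $S$ satisfying the hypotheses $\|R_j - S\| \leq a$ and $\|\sum_j p_j R_j - S\| \leq b$ by assumption.

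The second step is to simply invoke Lemma~\ref{thm:NonUnitaryMixing_app} to conclude
\begin{equation*}
\|\bar{\Lambda} - \bar{\mathcal{V}}\|_\diamond \leq a^2 + 2b\|S\| .
\end{equation*}
The final step is to bound $\|S\| \leq 1$: since $S = \Pi V \Pi'$ with $V$ unitary and $\Pi,\Pi'$ orthogonal projectors, submultiplicativity of the spectral norm gives $\|S\| \leq \|\Pi\|\|V\|\|\Pi'\| \leq 1$. Substituting this into the previous inequality yields the claimed bound $\|\bar{\Lambda} - \bar{\mathcal{V}}\|_\diamond \leq a^2 + 2b$.

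I do not anticipate a serious technical obstacle here; the main subtlety is to confirm that the 1-norm bound underlying Lemma~\ref{thm:NonUnitaryMixing_app} carries over to the diamond norm when the input state lives on an enlarged Hilbert space. This is already handled within the proof of Lemma~\ref{thm:NonUnitaryMixing_app}, since the H\"older-type estimates used there apply uniformly to any density matrix, including one tensored with an ancillary system, and the projectors $\Pi, \Pi'$ commute with the identity channel appended on the auxiliary register. Once the algebraic identity $\bar{\Lambda}(\rho) = \sum_j p_j R_j \rho R_j^\dag$ is established, everything else is an immediate corollary of the generalized mixing lemma, so the proposed proof is essentially a short reduction rather than a new calculation.
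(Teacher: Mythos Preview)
Your proposal is correct and follows essentially the same approach as the paper: unfold $\bar{\Lambda}(\rho) = \sum_j p_j R_j \rho R_j^\dag$ and $\bar{\mathcal{V}}(\rho) = S\rho S^\dag$ using the Hermiticity of the projectors, invoke the generalized mixing lemma, and finish with $\|S\| \leq 1$. The only cosmetic difference is that you justify $\|S\|\leq 1$ via submultiplicativity whereas the paper simply cites that $S$ is block-encoded in a unitary.
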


\begin{proof}
    Analogous to the previous proof, observe that by linearity
    \begin{equation}
        \begin{aligned}
            & \bar{\Lambda}(\rho) = \Pi \cdot  \Lambda \big( \Pi' \rho \Pi' \big) \cdot \Pi \\
            & = \sum_j p_j \Pi U_j \Pi' \cdot \rho \cdot \Pi' U_j^\dag \Pi \\
            & = \sum_j p_j R_j \rho U_j^\dag,
        \end{aligned}
    \end{equation}
    and 
    \begin{equation}
        \begin{aligned}
            \bar{\mathcal{V}}(\rho) &= \Pi \cdot  \mathcal{V} \big( \Pi' \rho \Pi' \big) \cdot \Pi \\
            & = \Pi V \Pi' \cdot \rho \cdot \Pi' V^\dag \Pi \\
            & = S \rho S^\dag, 
        \end{aligned}
    \end{equation}
    where we have use the fact the the orthogonal projectors are Hermitian: $\Pi^\dag = \Pi, \ \Pi'^\dag = \Pi'$. Therefore, we again see that $\bar{\Lambda}$ and $\bar{\mathcal{V}}$ are analogous to the channels $\Lambda$ and $\mathcal{S}$ considered in Theorem~\ref{thm:NonUnitaryMixing_app}, which implies that
    \begin{equation}
        \begin{aligned}
            & \| \bar{\Lambda} - \bar{\mathcal{V}} \|_\diamond \leq a^2 + 2b \| S \| \leq a^2 + 2b,
        \end{aligned}
    \end{equation}
    where we have again used that $\|S\| \leq 1 $ because $S$ is block-encoded in a unitary.
\end{proof}

\subsection{The Mixing Lemma for Controlled Operations}\label{app:MixingLemmaControlled}
The mixing lemma naturally also extends to controlled operations. That is, if the unitaries $\{U_j \}$ and probability distribution $p_j$ satisfy the conditions of the mixing lemma in approximating a target unitary $V$, then one can approximate a controlled-$V$ operation to the same level of accuracy by a probabilistic mixture of controlled-$U_j$'s. More precisely, letting $\text{c-}U_j$ denote a controlled $U_j$ operation, we have the following:

\begin{lemma}[Mixing Lemma for Controlled Operations]\label{lemma:Mixing_Controlled}
    Let c-$V = |0\rangle \langle 0 | \otimes I + |1\rangle \langle 1 | \otimes V$ be a controlled unitary, and $\text{c-}\mathcal{V}(\rho) = (\text{c-}V) \rho (\text{c-}V)^\dag $ the corresponding channel. Suppose there exist $m$ unitaries $\{ U_j \}_{j=1}^m$ and an associated probability distribution $p_j$ that approximate $V$ as
    \begin{equation}
    \begin{aligned}
        & \| U_j - V \| \leq a \text{ for all } j,\\
        & \Big\| \sum_{j=1}^m p_j U_j - V \Big\| \leq b ,
    \end{aligned}
    \end{equation}
    for some $a,b > 0$. Then, the channel comprising a mixture of controlled unitaries $\text{c-}\Lambda(\rho) = \sum_{j=1}^m p_j (\text{c-}U_j) \rho (\text{c-}U_j)^\dag$ approximates the controlled channel $\text{c-}\mathcal{V}$ as
    \begin{equation}
        \| \text{c-}\Lambda - \text{c-}\mathcal{V} \|_\diamond \leq a^2 + 2b . 
    \end{equation}
\end{lemma}

\begin{proof}
    To prove this, note that 
    \begin{equation}
        \big\| \text{c-}U_j - \text{c-}V \big\| = \big\| |1\rangle \langle 1| \otimes (U_j - V) \big\| = \| U_j - V \| \leq a,
    \end{equation}
    and similarly
    \begin{equation}
    \begin{aligned}
        \Big\| \sum_j p_j (\text{c-}U_j) - \text{c-}V \Big\| &= \Big\| |1\rangle \langle 1| \otimes \Big(\sum_j p_j U_j - V \Big) \Big\| \\
        &= \Big\| \sum_{j} p_j U_j - V \Big\| \leq b . 
    \end{aligned}
    \end{equation}
    Evidently, the controlled unitaries $\text{c-}U_j$ and $\text{c-}V$ satisfy the conditions of the mixing lemma. Therefore, one can straightforwardly apply the mixing lemma to these operations to conclude that $\| \text{c-}\Lambda - \text{c-}\mathcal{V} \|_\diamond \leq a^2 + 2b$. 
\end{proof}

In the context of stochastic QSP, this lemma establishes that stochastic QSP is compatible with controlled target operations, which are integral to many quantum algorithms (e.g., the QSP-based phase estimation algorithm in Ref.~\cite{martyn2021grand}). Specifically, this result shows that a controlled target operation can be approximated through stochastic QSP as a probabilistic mixture of controlled QSP operations.

\section{Proof of Corollary~\ref{cor:qsp_unitary}}\label{app:proof_qsp_unitary}
For a given target function $F(x)$ approximated by a truncated polynomial $P(x)$, QSP yields a unitary $U[P(A)]$ that block-encodes $P(A)$, and thus approximates $U[F(A)]$ and $F(A)$ respectively. While most quantum algorithms consider only $P(A)$, some algorithms use the entire unitary $U[P(A)]$. For example, QSP-based methods for ground-state preparation and energy estimation \cite{Dong_2022} consider a Hamiltonian $H$, find a polynomial transformation $P(H)$ that approximates a projector onto an energy range $\Pi_{\leq E} \approx P(H)$, and construct its block-encoding $U[P(H)]$. Supposing that $U[P(H)]$ achieves this block-encoding with $k$ ancilla qubits, they repeatedly perform the map:
\begin{align}
\ket{\psi} \to U[P(H)] \ket{0^k}\ket{\psi}
\end{align}
followed by a measurement of the $k$ ancillae. With proper interpretation of the measurement result, this approximately implements a positive operator-valued measure with operators $\{ \Pi_{\leq E}, I - \Pi_{\leq E} \}$ which forms the core of the algorithms.

Amplitude amplification is another example involving the entire block-encoding unitary. Suppose we have a projector $\Pi \approx P(A)$, and our goal is to prepare the normalized state $\Pi\ket{\psi} / \|\Pi\ket{\psi}\|$ given a unitary that prepares $\ket{\psi}$. Then an amplitude amplification algorithm could construct the required Grover reflection operators using the unitary that prepares $\ket{\psi}$ and the block-encoding unitary $U[P(A)]$~\cite{Gily_n_2019}.

Suppose that we want to enhance either of the above techniques with stochastic QSP. Now it no longer suffices to show that the quantum channel implemented by stochastic QSP approximates just the projected transformation $P(A)$. Instead, the channel must approximate $U[F(A)]$ as a whole. We find that this is indeed the case. Suppose that there is some function $G(x)$ accompanying our target function $F(x)$ that satisfies $|F(x)|^2 + (1-x^2)|G(x)|^2 = 1$ for all $x \in [-1,1]$. Then the following holds.

\begin{corollary*} 
In the setting of Theorem~\ref{thm:StochasticQSP}, consider the ensemble of quantum circuits that implement stochastic QSP (i.e. the QSP circuits that implement $\{P_j(A)\}$), upon leaving the QSP ancilla qubit(s) unmeasured. Denoting this ensemble by $\{U_j\}$, the quantum channel $\rho \to \sum_j p_j U_j \rho U_j^\dag $ approximates the map $\rho \to U[F(A)]\rho U[F(A)]^\dag$ to error $O(\epsilon)$ in diamond norm.
\end{corollary*}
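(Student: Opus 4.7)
The plan is to treat the unmeasured-ancilla stochastic QSP channel as a block-encoding with no output post-selection, and then apply Lemma~\ref{lemma:BlockEncodingMixing_app} to compare it to the target isometric map. Let $V$ be the ideal QSP circuit that block-encodes $F(A)$ and simultaneously has $G(A)\sqrt{I - A^2}$ in its $(\ket{1}\bra{0} \otimes I)$ block, so that $\tilde V := V(\ket{0} \otimes I)$ is precisely the target isometry of Eq.~\eqref{eqn:povmmap_app}. Analogously write $\tilde U_j := U_j(\ket{0} \otimes I)$. Then the target channel is $\bar{\mathcal V}(\rho) = \tilde V \rho \tilde V^\dagger$, the stochastic channel restricted to $\ket{0}$-ancilla input is $\bar \Lambda(\rho) = \sum_j p_j \tilde U_j \rho \tilde U_j^\dagger$, and the corollary reduces to showing $\| \bar\Lambda - \bar{\mathcal V} \|_\diamond = O(\epsilon)$.

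Applying Lemma~\ref{lemma:BlockEncodingMixing_app} with input projector $\Pi' = \ket{0}\bra{0} \otimes I$ and trivial output projector $\Pi = I$ reduces the problem to bounding $a := \max_j \| \tilde U_j - \tilde V \|$ and $b := \| \sum_j p_j \tilde U_j - \tilde V \|$, with the lemma then yielding $\| \bar\Lambda - \bar{\mathcal V} \|_\diamond \leq a^2 + 2b$. Each isometry difference decomposes in the ancilla basis as
\begin{equation*}
\tilde U_j - \tilde V = \ket{0} \otimes (P_j(A) - F(A)) + \ket{1} \otimes (Q_j(A) - G(A))\sqrt{I - A^2},
\end{equation*}
with an analogous decomposition for the stochastic average. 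The top-block quantities $\| P_j - F\|$ and $\| \sum_j p_j P_j - F\|$ are immediately bounded by $2\sqrt\epsilon$ and $\epsilon$ respectively via Theorem~\ref{thm:StochasticQSP}, so the task reduces to establishing matching $O(\sqrt\epsilon)$ and $O(\epsilon)$ bounds on the weighted bottom-block quantities $\|(Q_j - G)\sqrt{I - A^2}\|$ and $\|(\sum_j p_j Q_j - G)\sqrt{I - A^2}\|$.

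The main obstacle is this bottom-block analysis, since the $Q_j$'s are not specified by Theorem~\ref{thm:StochasticQSP} but are rather the complementary polynomials that QSP phase-finding assigns to each $P_j$. I would exploit the QSP unitarity constraint $|P|^2 + (1-x^2)|Q|^2 = 1$, which holds for both $(F,G)$ and each $(P_j, Q_j)$; subtracting these identities yields the pointwise relation $(1 - x^2)(|Q_j|^2 - |G|^2) = |F|^2 - |P_j|^2$, tying the bottom-block error to the already-controlled top-block error. To promote this into the needed $O(\sqrt\epsilon)$ operator bound I would choose the phase convention of each $U_j$ so that $Q_j$ approaches $G$ (rather than its negation) as $P_j \to F$, and leverage the explicit Chebyshev-truncation structure of the $P_j$'s in Eq.~\eqref{eq:polynomial_ensemble} to control the behaviour at points where $|F|$ approaches $1$ (where a naive $\sqrt{|F^2 - P_j^2|}$ bound would degrade to $O(\epsilon^{1/4})$). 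Feeding the resulting $a = O(\sqrt\epsilon)$ and $b = O(\epsilon)$ into the mixing lemma then gives $\| \bar\Lambda - \bar{\mathcal V} \|_\diamond \leq a^2 + 2b = O(\epsilon)$, establishing the corollary.
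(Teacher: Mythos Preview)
Your plan coincides with the paper's at the structural level: apply the block-encoding mixing lemma with input projector $\ket{0}\bra{0}\otimes I$ and no output projection, decompose each $\|\tilde U_j - \tilde V\|$ into its $\ket{0}$- and $\ket{1}$-block contributions, and control the $\ket{1}$-block via the QSP unitarity constraint $|P|^2 + (1-x^2)|Q|^2 = 1$. Where you diverge is in how the bottom-block bound is actually extracted. You anticipate needing a phase-convention alignment for $Q_j$ and the explicit Chebyshev-truncation structure of $P_j$ to beat an $\epsilon^{1/4}$ degradation; the paper instead sets $\alpha := P_j - F$, $\beta := Q_j - G$, subtracts the two unitarity identities to obtain $(1-x^2)\bigl(\beta Q_j^* + \beta^* Q_j + |\beta|^2\bigr) = -\bigl(\alpha P_j^* + \alpha^* P_j + |\alpha|^2\bigr)$, and then argues by elementary inequalities that this forces $|\beta|^2 \leq 3|\alpha|$. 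No phase choice and no polynomial structure are invoked---only the constraint itself. The same manipulation is repeated with $\bar\alpha := \sum_j p_j P_j - F$ and $\bar\beta := \sum_j p_j Q_j - G$ to produce the $b$-bound. So the concrete bottom-block step you leave as a plan is exactly what the paper supplies, but through a bare algebraic route rather than the structural one you sketch; your proposed extra ingredients (phase alignment, Chebyshev structure) are not used in the paper's argument.
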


\begin{proof} The precise form of $U[F(A)]$ in the most general setting for possibly non-Hermitian and possibly non-square $A$ can be reviewed in Ref.~\cite{Gily_n_2019}. In the basis presented in their Table~2, each of its matrix elements either vanish or equal $\pm F(\sigma_i) \sigma_i$ or $G(\sigma_i) \sqrt{1-\sigma_i^2}$ where $\sigma_i$ is a singular value of $A$.

By the conditions of QSP (see Eq.~\eqref{eq:qsp_conditions}), each polynomial of the ensemble $P_j(A)$ has a corresponding polynomial $Q_j(x)$ that satisfies
\begin{align}
|P_j(x)|^2 + (1-x^2)|Q_j(x)|^2 = 1
\end{align}
for all $x \in [-1,1]$. Then, for each unitary $U[P_j(A)]$ in the ensemble, the matrix elements in the same basis as considered for $U[F(A)]$ are respectively proportional to either $\pm P_j(\sigma_i) \sigma_i$ or $Q_j(\sigma_i) \sqrt{1-\sigma_i^2}$.

It follows that we can express the accuracy of the ensemble in terms of the following quantities. We define for some fixed $j$:
\begin{align}
&\alpha(x) := P_j(x) - F(x), \text{ and }  \\
&\beta(x) := Q_j(x)  - G(x).
\end{align}
Then:
\begin{align}
\Big|U[P_j(A)] - U[F(A)]\Big| \leq \sup_x \big[ \alpha(x)^2 + \beta(x)^2 (1-x^2)  \big]  \label{eqn:block_unitary_individual}
\end{align}

Now we calculate:
\begin{equation}
\begin{aligned}
1 &=  \left|F\right|^2 + (1-x^2) \left|G\right|^2\\
  &=  \left|P_j + \alpha\right|^2 + (1-x^2) \left|Q_j + \beta\right|^2\\
  &= (|P_j|^2 + \alpha P_j^* + \alpha^* P_j + |\alpha|^2)\\
  &+ (1+x^2)(|Q_j|^2 + \beta Q_j^* + \beta^* Q_j + |\beta|^2)\\
  &= 1 + (\alpha P_j^* + \alpha^* P_j + |\alpha|^2)\\
  &+ (1+x^2)(\beta Q_j^* + \beta^* Q_j + |\beta|^2),
\end{aligned}
\end{equation}
where $^*$ denotes the complex conjugate here. 
Thus:
\begin{align}
(1+x^2)|\beta Q_j^* + \beta^* Q_j + |\beta|^2| = |\alpha P_j^* + \alpha^* P_j + |\alpha|^2|.
\end{align}
We have:
\begin{equation}
\begin{aligned}
(1+x^2)|\beta Q_j^* + \beta^* Q_j + |\beta|^2| \geq |\beta|^2\\
|\alpha P_j^* + \alpha^* P_j + |\alpha|^2| \leq 2|\alpha| + |\alpha|^2 \leq 3|\alpha|
\end{aligned}
\end{equation}
So $|\beta|^2 \leq 3|\alpha| \leq 6\sqrt{\epsilon}$. Having already established that $|\alpha(x)| = |P_j(x) - F(x)| \leq 2\sqrt{\epsilon}$ in the proof of Theorem~\ref{thm:StochasticQSP}, we conclude the quantity in Eq.~\eqref{eqn:block_unitary_individual} is bounded by $\sqrt{40\epsilon}$.

Considering $\bar \alpha(x) := \sum_j p_j P_j(x) - F(x)$ and $\bar \beta(x) := \sum_j p_j Q_j(x)  - G(x)$, we observe:
\begin{align}
\Bigg| \sum_j p_j U[P_j(A)] -  U[F(A)]\Bigg| \leq \sup_x \big[ \bar\alpha(x)^2 + \bar\beta(x)^2 (1-x^2)  \big] \label{eqn:block_unitary_lincomb}
\end{align}
An identical calculation yields the bound $|\bar\beta|^2 \leq 3|\bar \alpha|$, hence bounding the quantity in Eq.~\eqref{eqn:block_unitary_lincomb} by $\sqrt{10}\epsilon \leq 4\epsilon$.

Now applying Lemma~\ref{lemma:BlockEncodingMixing} with $a = \sqrt{40\epsilon}$ and $b = 4\epsilon$, we obtain an error in diamond norm of $a^2 + 2b = 48\epsilon = O(\epsilon)$. Naturally the constant $48$ can be tightened significantly using a more careful analysis.
\end{proof}

\end{document}